\theoremstyle{thmstyleone}%
\newtheorem{theorem}{Theorem}%  meant for continuous numbers
\newtheorem{proposition}[theorem]{Proposition}% 
\theoremstyle{thmstyletwo}%
\theoremstyle{thmstylethree}%
\newtheorem{definition}{Definition}%
\theoremstyle{remark}
\newtheorem{remark}{Remark}
\newcommand{\PD}[1]     {\todo[inline, color=blue!20]{ {\sf PD: #1}}}
\newcommand{\DG}[1]     {\todo[inline, color=purple!20]{ {\sf DG: #1}}}
\newcommand{\AJL}[1]     {\todo[inline, color=orange!30]{ {\sf AJL: #1}}}
\newcommand{\hide}[1]{}
\newcommand*{\showArXiv}{}% 
\DeclarePairedDelimiter\abs{\lvert}{\rvert}
\begin{document}

\title[ClusterGraph]{ClusterGraph: a new tool for visualization and compression of multidimensional data}

%%=============================================================%%
%% GivenName	-> \fnm{Joergen W.}
%% Particle	-> \spfx{van der} -> surname prefix
%% FamilyName	-> \sur{Ploeg}
%% Suffix	-> \sfx{IV}
%% \author*[1,2]{\fnm{Joergen W.} \spfx{van der} \sur{Ploeg} 
%%  \sfx{IV}}\email{iauthor@gmail.com}
%%=============================================================%%
\author[1]{\fnm{Paweł} \sur{Dłotko}}
\email{pdlotko@impan.pl}

\author*[1]{\fnm{Davide} \sur{Gurnari}}
\email{dgurnari@impan.pl}

\author[1,2]{\fnm{Mathis} \sur{Hallier}}
\email{mathis.hallier@etu.utc.fr}

\author[3]{\fnm{Anna} \sur{Jurek-Loughrey}}
\email{a.jurek@qub.ac.uk }

\affil[1]{\orgdiv{Dioscuri Centre in Topological Data Analysis}, \orgname{Mathematical Institute, Polish Academy of Sciences}, \orgaddress{\city{Warsaw}, \country{PL}}}

\affil[2]{\orgdiv{Génie Informatique}, \orgname{Université de Technologie de Compiègne}, \orgaddress{\city{Compiègne}, \country{FR}}}

\affil[3]{\orgdiv{School of Electronics, Electrical Engineering and Computer Science}, \orgname{Queens University of Belfast}, \orgaddress{\city{Belfast}, \country{UK}}}

%%==================================%%
%% Sample for unstructured abstract %%
%%==================================%%

\abstract{
    Understanding the global organization of complicated and high dimensional data is of primary interest for many branches of applied sciences.
    It is typically achieved by applying dimensionality reduction techniques mapping the considered data into lower dimensional space.
    This family of methods, while preserving local structures and features, often misses the global structure of the dataset.
    Clustering techniques are another class of methods operating on the data in the ambient space.
    They group together points that are similar according to a fixed similarity criteria, however unlike dimensionality reduction techniques, they do not provide information about the global organization of the data.

    Leveraging ideas from Topological Data Analysis, in this paper we provide an additional layer on the output of any clustering algorithm.
    Such data structure, \emph{ClusterGraph}, provides information about the global layout of clusters, obtained from the considered clustering algorithm.
    Appropriate measures are provided to assess the quality and usefulness of the obtained representation.
    Subsequently the ClusterGraph, possibly with an appropriate structure--preserving simplification, can be visualized and used in synergy with state of the art exploratory data analysis techniques.
}

% \keywords{keyword1, Keyword2, Keyword3, Keyword4}

%%\pacs[JEL Classification]{D8, H51}

%%\pacs[MSC Classification]{35A01, 65L10, 65L12, 65L20, 65L70}

\maketitle

\section{Introduction}\label{intro}

\ifdefined\showComments
    \AJL{I think this paper will benefit from more motivation provided in this section. Much more could be said why such a visualisation is important providing specific examples from different applications}
\fi

High-throughput experiments are becoming extremely common in applied sciences.
Now more than ever, large high-dimensional datasets are generated in almost every laboratory calling for an automated and reliable way to extract new knowledge from them.
Let us fix a dataset $X$, usually embedded in a high dimensional space.
Standard dimension reduction techniques, including PCA~\cite{pca}, t-SNE~\cite{tsne}, \textsc{UMAP}~\cite{umap} and \textsc{PHATE}~\cite{phate} aim to find a low dimensional embedding of $X$ so that points that are close in $X$, are also close in the embedding.
However, preservation of the global organization of $X$ in general, and information about distances of far away points in particular, is a challenge for this methods.

Clustering techniques~\cite{statLearning, SAXENA2017664} on the other hand, based on a fixed similarity measure, provide a partition of the input dataset $X$.
%This very general methodology has been successfully applied to a number of tasks~\cite{DBLP:books/crc/aggarwal2013}. 
However, clustering itself does not provide information about either intra-- or inter-- cluster organization of points, therefore is not used to asses the global structure of the data.
The aim of this work is two bridge this two approaches by enriching the output of a clustering algorithm with additional information on the data's global organization.
\hide{
    \begin{figure}[!htb]
        \centering
        % \begin{subfigure}[b]{0.22\linewidth }
        % \centering
        %     \input{nmeth_img/line_1}
        %     \subcaption{}
        % \end{subfigure}
        % \hfill
        % \begin{subfigure}[b]{0.22\linewidth }
        % \centering
        %     \input{nmeth_img/line_2}
        %     \subcaption{}        
        % \end{subfigure}
        % \hfill
        % \begin{subfigure}[b]{0.22\linewidth }
        % \centering
        %     \input{nmeth_img/triangle_1}
        %     \subcaption{}
        % \end{subfigure}
        % \hfill
        % \begin{subfigure}[b]{0.22\linewidth }
        % \centering
        %     \input{nmeth_img/triangle_2}
        %     \subcaption{}
        % \end{subfigure}
        \begin{subfigure}[b]{\linewidth }
            \centering
            \definecolor{tab_blue}{HTML}{1f77b4}
\definecolor{tab_orange}{HTML}{ff7f0e}
\definecolor{tab_green}{HTML}{2ca02c}
\definecolor{tab_red}{HTML}{d62728}
\definecolor{tab_purple}{HTML}{9467bd}
\definecolor{tab_brown}{HTML}{8c564b}
\definecolor{tab_pink}{HTML}{e377c2}
\definecolor{tab_gray}{HTML}{7f7f7f}
\definecolor{tab_olive}{HTML}{bcbd22}
\definecolor{tab_cyan}{HTML}{17becf}
\begin{tikzpicture}[line cap=round,line join=round,>=triangle 45,x=1cm,y=1cm]

\clip(-1,-0.6) rectangle (6,2.6);

% \draw [red, dashed] (1,2)-- (4,2);
% \draw [red, dashed] (1,2)-- (5,0.5);
% \draw [red, dashed] (1,2)-- (3,1);
% \draw [red, dashed] (1,2)-- (4,0);
% \draw [red, dashed] (1,2)-- (4.5,1);

% \draw [red, dashed] (0,1)-- (4,2);
% \draw [red, dashed] (0,1)-- (5,0.5);
% \draw [red, dashed] (0,1)-- (3,1);
% \draw [red, dashed] (0,1)-- (4,0);

% \draw [red, dashed] (1,0)-- (4,2);
% \draw [red, dashed] (1,0)-- (5,0.5);
% \draw [red, dashed] (1,0)-- (3,1);
% \draw [red, dashed] (1,0)-- (4,0);
% \draw [red, dashed] (1,0)-- (4.5,1);

% \draw [line width=2pt] (1,2)-- (0, 1);
% \draw [line width=2pt] (0,1)-- (1,0);
% \draw [line width=2pt] (4,2)-- (3,1);
% \draw [line width=2pt] (3,1)-- (4,0);
% \draw [line width=2pt] (4,0)-- (5,0.5);
% \draw [line width=2pt] (5,0.5)-- (4.5,1);
% \draw [line width=2pt] (1,0)-- (1,2);
% \draw [line width=2pt] (4,0)-- (4.5,1);
% \draw [line width=2pt] (4,0)-- (4,2);
% \draw [line width=2pt] (4,2)-- (5,0.5);
% \draw [line width=2pt] (4,2)-- (4.5,1);
% \draw [line width=2pt] (4.5,1)-- (3,1);
% \draw [line width=2pt] (3,1)-- (5,0.5);

\draw [rotate around={0:(1,2)},line width=2pt,color=tab_blue,fill=tab_blue,fill opacity=0.5] (1,2) ellipse (1.1180339887498945cm and 0.5cm);
\draw [rotate around={0:(1,0)},line width=2pt,color=tab_orange,fill=tab_orange,fill opacity=0.5] (1,0) ellipse (1.118033988749896cm and 0.5cm);
\draw [rotate around={0:(4,2)},line width=2pt,color=tab_purple,fill=tab_purple,fill opacity=0.5] (4,2) ellipse (1.118033988749901cm and 0.5cm);
\draw [rotate around={90:(3,1)},line width=2pt,color=tab_red,fill=tab_red,fill opacity=0.5] (3,1) ellipse (1.118033988749901cm and 0.5cm);
\draw [rotate around={0:(4,0)},line width=2pt,color=tab_cyan,fill=tab_cyan,fill opacity=0.5] (4,0) ellipse (1.1180339887498978cm and 0.5cm);
\draw [rotate around={90:(5,0.5)},line width=2pt,color=tab_pink,fill=tab_pink,fill opacity=0.5] (5,0.5) ellipse (0.7071067811865425cm and 0.5cm);
\draw [rotate around={0:(4.5,1)},line width=2pt,color=tab_gray,fill=tab_gray,fill opacity=0.5] (4.5,1) ellipse (0.6403124237432976cm and 0.4cm);
\draw [rotate around={90:(0,1)},line width=2pt,color=tab_green,fill=tab_green,fill opacity=0.5] (0,1) ellipse (1.118033988749896cm and 0.5cm);

\end{tikzpicture}
            \subcaption{}
        \end{subfigure}
        \hfill
        \vspace{0.04\linewidth}
        \hfill
        \begin{subfigure}[b]{0.3\linewidth }
            \centering
            \definecolor{tab_blue}{HTML}{1f77b4}
\definecolor{tab_orange}{HTML}{ff7f0e}
\definecolor{tab_green}{HTML}{2ca02c}
\definecolor{tab_red}{HTML}{d62728}
\definecolor{tab_purple}{HTML}{9467bd}
\definecolor{tab_brown}{HTML}{8c564b}
\definecolor{tab_pink}{HTML}{e377c2}
\definecolor{tab_gray}{HTML}{7f7f7f}
\definecolor{tab_olive}{HTML}{bcbd22}
\definecolor{tab_cyan}{HTML}{17becf}

\begin{tikzpicture}[line cap=round,line join=round,>=triangle 45,x=1cm,y=1cm]

\clip(1.25,-0.1) rectangle (6,2.6);

\draw [red, dashed] (2.5,2)-- (4,2);
\draw [red, dashed] (2.5,2)-- (5,0.5);
\draw [red, dashed] (2.5,2)-- (3,1);
\draw [red, dashed] (2.5,2)-- (4,0);
\draw [red, dashed] (2.5,2)-- (4.5,1);

\draw [red, dashed] (1.5,1)-- (4,2);
\draw [red, dashed] (1.5,1)-- (5,0.5);
\draw [red, dashed] (1.5,1)-- (3,1);
\draw [red, dashed] (1.5,1)-- (4,0);

\draw [red, dashed] (2.5,0)-- (4,2);
\draw [red, dashed] (2.5,0)-- (5,0.5);
\draw [red, dashed] (2.5,0)-- (3,1);
\draw [red, dashed] (2.5,0)-- (4,0);
\draw [red, dashed] (2.5,0)-- (4.5,1);

\draw [line width=2pt] (2.5,2)-- (1.5, 1);
\draw [line width=2pt] (1.5,1)-- (2.5,0);
\draw [line width=2pt] (4,2)-- (3,1);
\draw [line width=2pt] (3,1)-- (4,0);
\draw [line width=2pt] (4,0)-- (5,0.5);
\draw [line width=2pt] (5,0.5)-- (4.5,1);
\draw [line width=2pt] (2.5,0)-- (2.5,2);
\draw [line width=2pt] (4,0)-- (4.5,1);
\draw [line width=2pt] (4,0)-- (4,2);
\draw [line width=2pt] (4,2)-- (5,0.5);
\draw [line width=2pt] (4,2)-- (4.5,1);
\draw [line width=2pt] (4.5,1)-- (3,1);
\draw [line width=2pt] (3,1)-- (5,0.5);

\draw [fill=tab_blue] (2.5,2) circle (2.5pt);
\draw [fill=tab_green] (1.5,1) circle (2.5pt);
\draw [fill=tab_orange] (2.5,0) circle (2.5pt);
\draw [fill=tab_purple] (4,2) circle (2.5pt);
\draw [fill=tab_red] (3,1) circle (2.5pt);
\draw [fill=tab_cyan] (4,0) circle (2.5pt);
\draw [fill=tab_pink] (5,0.5) circle (2.5pt);
\draw [fill=tab_gray] (4.5,1) circle (2.5pt);

\end{tikzpicture}
            \subcaption{}
        \end{subfigure}
        \hfill
        \begin{subfigure}[b]{0.3\linewidth }
            \centering
            \definecolor{tab_blue}{HTML}{1f77b4}
\definecolor{tab_orange}{HTML}{ff7f0e}
\definecolor{tab_green}{HTML}{2ca02c}
\definecolor{tab_red}{HTML}{d62728}
\definecolor{tab_purple}{HTML}{9467bd}
\definecolor{tab_brown}{HTML}{8c564b}
\definecolor{tab_pink}{HTML}{e377c2}
\definecolor{tab_gray}{HTML}{7f7f7f}
\definecolor{tab_olive}{HTML}{bcbd22}
\definecolor{tab_cyan}{HTML}{17becf}

\begin{tikzpicture}[line cap=round,line join=round,>=triangle 45,x=1cm,y=1cm]

\clip(1.25,-0.1) rectangle (6,2.6);

\draw [line width=2pt] (2.5,2)-- (1.5,1);
\draw [line width=2pt] (1.5,1)-- (2.5,0);
\draw [line width=2pt] (4,2)-- (3,1);
\draw [line width=2pt] (3,1)-- (4,0);
\draw [line width=2pt] (4,0)-- (5,0.5);
\draw [line width=2pt] (5,0.5)-- (4.5,1);

\draw [red, dashed] (2.5,0)-- (2.5,2);
\draw [red, dashed] (3,1)-- (5,0.5);
\draw [red, dashed] (4.5,1)-- (3,1);
\draw [red, dashed] (4,2)-- (4,0);
\draw [red, dashed] (4,0)-- (4.5,1);
\draw [red, dashed] (4,2)-- (5,0.5);
\draw [red, dashed] (4,2)-- (4.5,1);

\draw [fill=tab_blue] (2.5,2) circle (2.5pt);
\draw [fill=tab_green] (1.5,1) circle (2.5pt);
\draw [fill=tab_orange] (2.5,0) circle (2.5pt);
\draw [fill=tab_purple] (4,2) circle (2.5pt);
\draw [fill=tab_red] (3,1) circle (2.5pt);
\draw [fill=tab_cyan] (4,0) circle (2.5pt);
\draw [fill=tab_pink] (5,0.5) circle (2.5pt);
\draw [fill=tab_gray] (4.5,1) circle (2.5pt);

\end{tikzpicture}
            \subcaption{}
        \end{subfigure}
        \hfill
        \begin{subfigure}[b]{0.3\linewidth }
            \centering
            \definecolor{tab_blue}{HTML}{1f77b4}
\definecolor{tab_orange}{HTML}{ff7f0e}
\definecolor{tab_green}{HTML}{2ca02c}
\definecolor{tab_red}{HTML}{d62728}
\definecolor{tab_purple}{HTML}{9467bd}
\definecolor{tab_brown}{HTML}{8c564b}
\definecolor{tab_pink}{HTML}{e377c2}
\definecolor{tab_gray}{HTML}{7f7f7f}
\definecolor{tab_olive}{HTML}{bcbd22}
\definecolor{tab_cyan}{HTML}{17becf}

\begin{tikzpicture}[line cap=round,line join=round,>=triangle 45,x=1cm,y=1cm]

\clip(1.25,-0.1) rectangle (6,2.6);

\draw [line width=2pt] (2.5,2)-- (1.5,1);
\draw [line width=2pt] (1.5,1)-- (2.5,0);
\draw [line width=2pt] (4,2)-- (3,1);
\draw [line width=2pt] (3,1)-- (4,0);
\draw [line width=2pt] (4,0)-- (5,0.5);
\draw [line width=2pt] (5,0.5)-- (4.5,1);

\draw [fill=tab_blue] (2.5,2) circle (2.5pt);
\draw [fill=tab_green] (1.5,1) circle (2.5pt);
\draw [fill=tab_orange] (2.5,0) circle (2.5pt);
\draw [fill=tab_purple] (4,2) circle (2.5pt);
\draw [fill=tab_red] (3,1) circle (2.5pt);
\draw [fill=tab_cyan] (4,0) circle (2.5pt);
\draw [fill=tab_pink] (5,0.5) circle (2.5pt);
\draw [fill=tab_gray] (4.5,1) circle (2.5pt);

\end{tikzpicture}
            \subcaption{}
        \end{subfigure}
        % \vspace{0.04\linewidth}

        \caption{\DG{shall we remove this picture?}The dataset in (a) is covered by eight overlapping clusters, whose union has two connected components.
            The ClusterGraph as a fully connected graphs is presented in (b).
            Note that it does not reflect the intrinsic shape of the set unlike the pruned ClusterGraph presented in (d).
            It is obtained from the ClusterGraph in (b) by isolating the cliques corresponding to different connected components (c) and then, within each of the components, removing edges that do not reflect the intrinsic distances between the data points.}
        \label{fig:example}
    \end{figure}
}
%%%%%%%

The first contribution of this paper is the construction of a \emph{ClusterGraph}; a graph--based structure on a top of a partition $\mathcal{C}(X)$ of the data obtained from a clustering algorithm $\mathcal{C}$ applied to $X$.
In the ClusterGraph $G = (V,E)$, each vertex corresponds to a single cluster from $\mathcal{C}(X)$.
Two vertices $u,v \in V$ are connected by an edge whose length corresponds to the distance between their respective clusters in $\mathcal{C}(X)$.
For the purpose of this construction, a number of inter--cluster distances defined in the ambient space are considered.

% \PD{If we decide to go with an example in Fig~\ref{fig:intuition} we can exchange the paragraph above with the following text:}
% \PD{Let us stop here to look at an example at Fig~\ref{fig:intuition} (a-d). In (a) and (c) $N$ points has been sampled from a two dimensional distributions centered in the blue points in panel (a) and (c). In both cases, we expect to obtain three clusters. What makes the two cases different is the global organization of the clusters; in the first case, the clusters form a triangular (Fig.~\ref{fig:intuition}(b)), in the second, a linear (Fig.~\ref{fig:intuition}(d)) shape. This information is encoded in ClusterGraphs when, for instance, an average distance between points in the clusters is used as a distance between clusters. ClusterGraphs build for the two point clouds presented above will be 3-cliques as long as three clusters are obtained. In the first (triangular) case, the length of all edges will be similar while in the second (linear) case, the lengths of two edges will be similar, while the third one will be twice as long. Hence the \emph{filtration} of the 3-clique with respect to the edge length, will be considerably different for those two cases. }

ClusterGraph has a number of advantages compared to alternative dimension reduction methods.
One of them is based on the fact that the distances, computed in the ambient space, are represented by labels on edges and not subjected to distortions made by standard dimension reduction techniques that force the projected data points to be embedded in an Euclidean space.
This  allows to visualize the \emph{global} distances in the dataset.
This is important as many datasets cannot be embedded into low dimensional Euclidean spaces without perturbing the distances between points.
% In addition, this method does not require for the input data points to be embedded in an Euclidean space, allowing to also tackle discrete metric sets. 
As an example of such a situation consider a collection of points in four clusters: $0$, $1$, $2$ and $3$. Points in each cluster are infinitesimally close. Distance between cluster $0$ and clusters $1$, $2$ and $3$ are $1$, while distance between  clusters $1$, $2$ and $3$ is $2$.
It is well known~\cite{morgan_embedding_1974, bourgain_lipschitz_1985}
%\DG{\url{https://mathoverflow.net/questions/12394/representability-of-finite-metric-spaces}} 
that such a graph cannot be isometrically embedded to any Euclidean space $\mathbb{R}^n$ for any $n$.
As a consequence, all dimension reduction techniques will distort the distances between clusters, as can be observed in Fig~\ref{fig:non_embedable_to_Euclidean}. ClusterGraph, on the contrary, provides the correct graph even in this case.

\begin{figure}
    \centering
    %  \begin{subfigure}[t]{0.45\textwidth }
    %      \centering
    %      \includegraphics[width=0.9\textwidth]{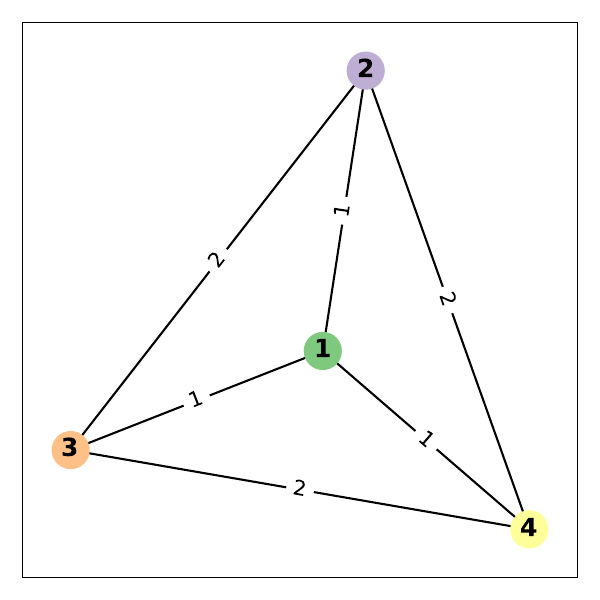}
    %      \caption{ClusterGraph}
    %      \label{fig:y_cg}
    %  \end{subfigure}
    %  %\vskip\baselineskip
    % \begin{subfigure}[t]{0.45\textwidth }
    %      \centering
    %      \includegraphics[width=0.9\textwidth, height=0.95\textwidth]{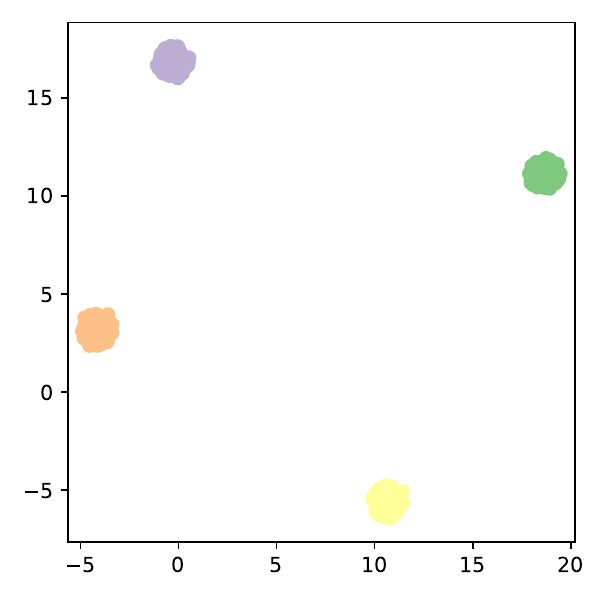}
    %      \caption{ \textsc{UMAP}} 
    %      \label{fig:y_cg}
    %  \end{subfigure}
    %  \begin{subfigure}[b]{0.45\textwidth }
    %      \centering
    %      \includegraphics[width=0.95\textwidth, height=0.9\textwidth]{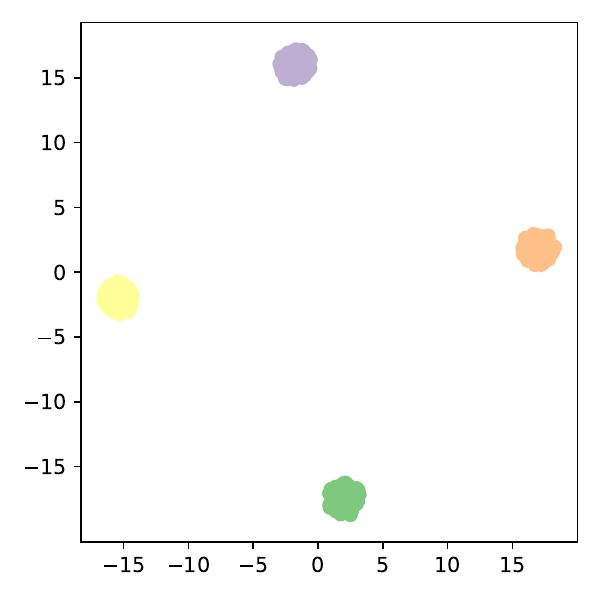}
    %      \caption{t-SNE}
    %      \label{fig:y_cg}
    %  \end{subfigure}
    %  \begin{subfigure}[b]{0.45\textwidth }
    %      \centering
    %      \includegraphics[width=0.95\textwidth, height=0.9\textwidth]{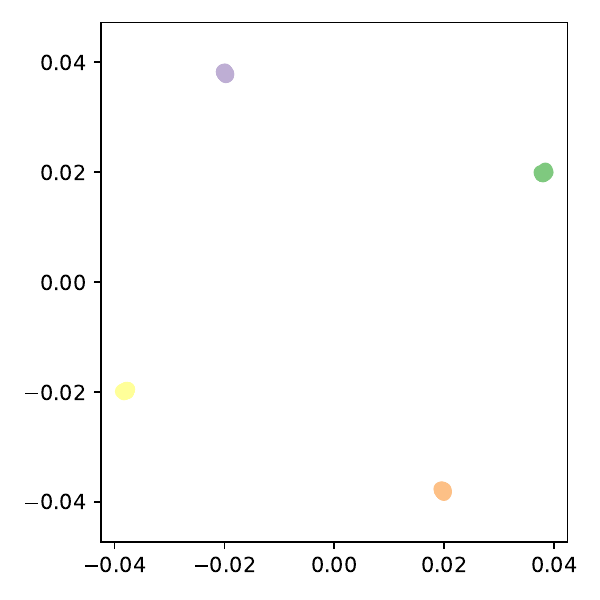}
    %      \caption{PHATE}
    %      \label{fig:y_cg}
    %  \end{subfigure}
    % \includegraphics[width=0.7\textwidth]{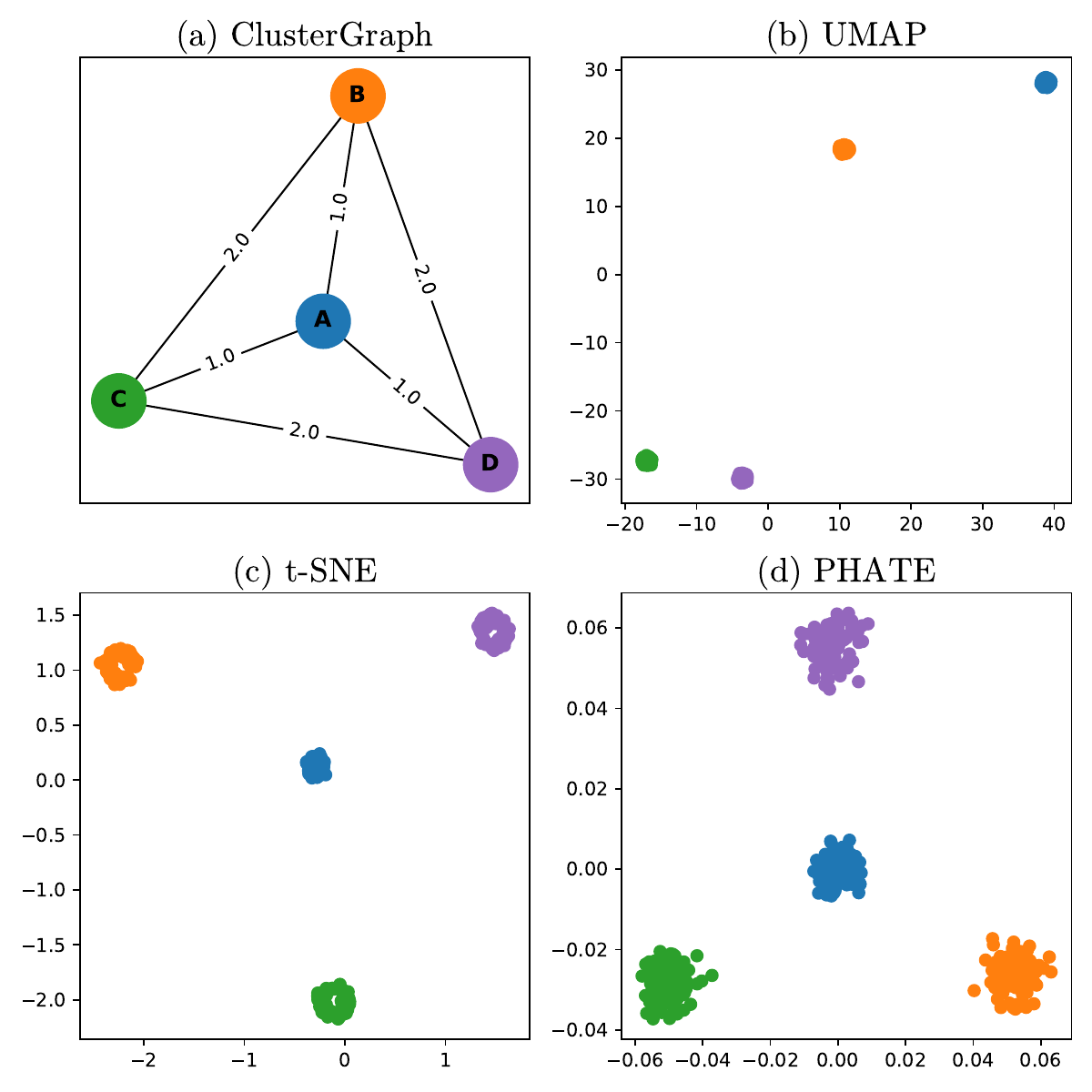}
    \includegraphics[width=\textwidth]{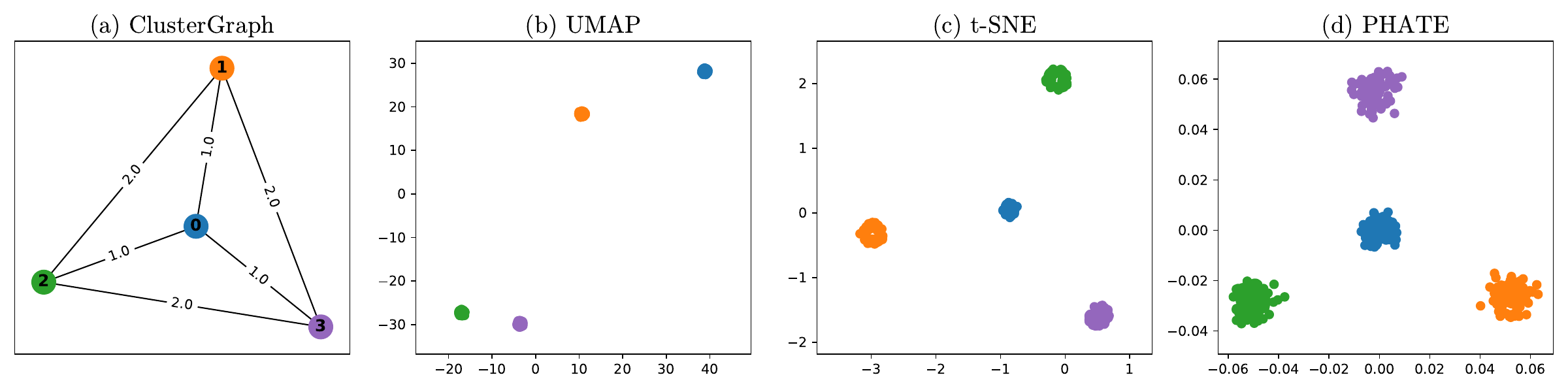}
    \caption{The dataset consisting of four clusters $0$ (blue), $1$ (orange), $2$ (green) and $3$ (purple), as described in the text, so that elements of cluster $0$ are distance one from elements from the remaining clusters and the mutual distances between elements of clusters $1, 2, 3$ are two.
        Such a dataset cannot be embedded, with the distances preserved, to any Euclidean space. In this case, UMAP (panel b) fails to capture the global layout, while t-SNE (panel c) and PHATE (panel d) do. However the coordinate systems of t-SNE and PHATE are drastically different.
        % \AJL{something wrong with this sentences}
        In both cases, as a result of the embedding into the Euclidean plane, the ratio of the distances $d(1,2) / d(0,1)$ is roughly $\sqrt{3}$ instead of the original $2$, the same is true for the other clusters. This is the optimal embedding that can be achieved when points are projected to Euclidean space. However, in the case of ClusterGraph (panel a), the distances are encoded as labels to the graph edges and therefore we are not restricted by any Euclidean coordinate system.
        % \AJL{it's not clear for me how a is better than c and d}  
        % \AJL{also, since this is the first figure you are referring to in text, perhaps it makes sense to make if figure 1.}
    }
    \label{fig:non_embedable_to_Euclidean}

\end{figure}

The second contribution is a method to assess the quality of the ClusterGraph $G$.
% obtained from point cloud $X$. 
Working under assumption that $X$ is sampled from a manifold equipped with an intrinsic distance, a \emph{metric distortion} between the intrinsic distances on $X$ and the distance induced by $G$ on $X$ is used to assess the quality of $G$.
The distance between points $x,y \in X$ induced by $G$ is the length of a shortest path in $G$ between vertices representing the clusters containing them.
% ($0$ if they belong to the same cluster)
% The intrinsic distance on $X$ is approximated using standard techniques~\cite{Crane2020ASO}. 
The logarithm of the ratio between the intrinsic and the ClusterGraph distance is used as a quality measure: the smaller its value, the better the quality of the ClusterGraph representation.

\ifdefined\showComments
    Note that the quality score defined above might be improved by removing certain edges from the considered ClusterGraph.
    \AJL{this may be discussed/explained later on, but at this point it is not clear why you would want remove edges from G. They represent distances between cluster. What would removing an edge mean in practice? No distance between clusters?}
    \DG{probably just remove the above sentence}
\fi
%%%%% HIDDEN
\hide{
    It can be observed in the example presented in Fig~\ref{fig:example} where a shape with intrinsic geometry is sampled (panel a).
    The full ClusterGraph (panel b) do not reveal an information about the structure of the set, however when edges with large metric distortion are removed (panel d), a shape of the initial set is recovered. \todo{remove if we decide to remove Fig1}
}
This procedure will be used to obtain a \emph{pruned ClusterGraph} that better approximates the intrinsic structure of the data.
For this purpose, a number of \emph{edge pruning} algorithms are proposed aiming to remove some edges while maintaining the global structure of the data.
A schema of the whole ClusterGraph pipeline is depicted in Fig.~\ref{fig:pipeline}.

% \begin{figure}[h]
%     \centering
%     \includegraphics[width=\textwidth]{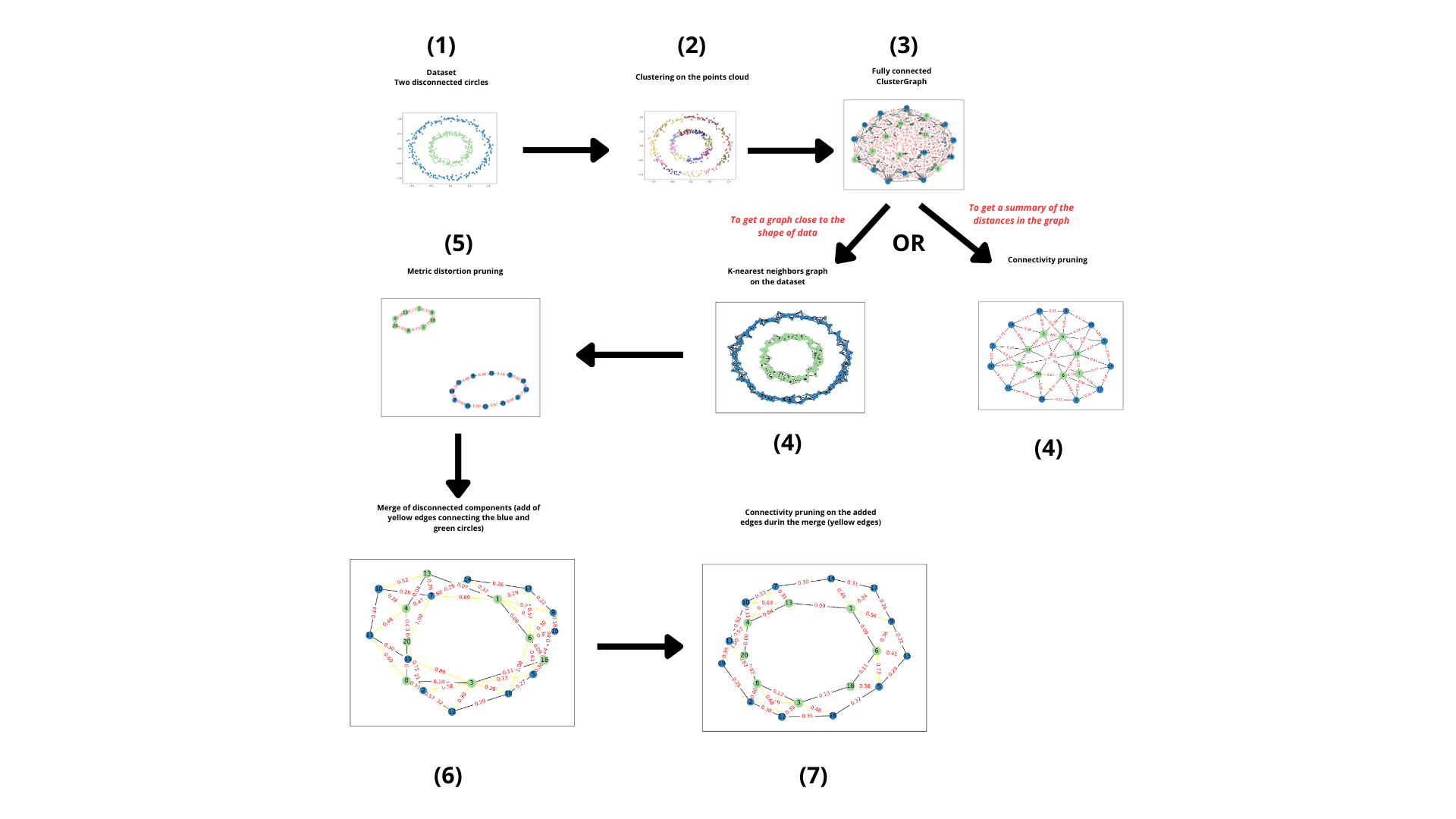}
%     \caption{Pipeline}
%     \label{fig:pipeline}
% \end{figure}

% \begin{figure}[h]
%     \centering
%     \input{nmeth_img/pipeline_CG}
%     \caption{ClusterGraph's Pipeline}
%     \label{fig:pipeline_test}
% \end{figure}

\begin{figure}[h]
    \centering
    \input{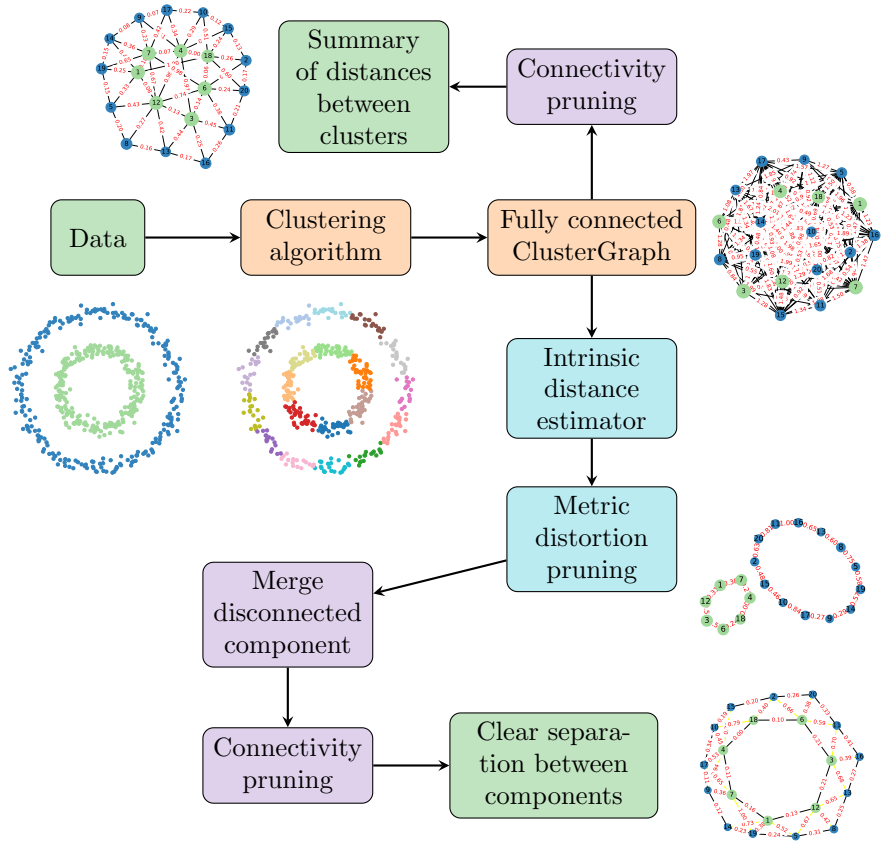}
    \caption{ClusterGraph pipeline with the two possible pruning strategies. Details on the example dataset that has been used to generate the figures can be found in Section~\ref{subsec:cc}.}
    \label{fig:pipeline}
\end{figure}

A Python implementation of the creation of ClusterGraph data structure, as well as the pruning algorithms and interactive visualization utilities is available at \href{https://github.com/dioscuri-tda/ClusterGraph}{github.com/dioscuri-tda/ClusterGraph}.

\section{Methods}\label{meth}
\subsection{ClusterGraph}~\label{sec:cg}
%In this section we present steps required to construct the \emph{ClusterGraph}  of a given dataset $X$ equipped with a metric $d_X : X \times X \rightarrow \mathbb R_{\ge 0} $\footnote{The conditions of the metric can be relaxed. The $d_X(x,y)=0$ iff $x=y$ as well as the triangle inequality are not necessary.}.  

Let $X$ be a dataset equipped with a metric $d_X : X \times X \rightarrow \mathbb R_{\ge 0}$.
%\footnote{The conditions of the metric can be relaxed. The $d_X(x,y)=0$ iff $x=y$ as well as the triangle inequality are not necessary.}.  
Take $\mathcal{C}$ to be an arbitrary hard or soft clustering algorithm.
Let $\mathcal{C}(X)$ be the partition or, in a more general case, a division of $X$ obtained from the clustering algorithm $\mathcal{C}$.
% \AJL{earlier you used C as partition, here you are referring to it as clustering algorithm. It is better to be consistent.}
A collection of sets $\{ C_i \}$ is a \emph{partition} of $X$ if for every  $C_i \neq C_j \in \mathcal{C}(X)$, $C_i \cap C_j = \emptyset$ and $\bigcup_{C_i \in \mathcal{C}(X)} C_{i} = X $.
Partitions of $X$ are typically obtained from hard clustering algorithms~\cite{statLearning}.
In a more general case, we can drop the empty intersection condition turning $\mathcal{C}(X)$ into a \emph{division} of $X$.
Divisions may be obtained using soft or fuzzy clustering methods~\cite{fuzzyClustering}.
They can also be obtained as byproducts of Topological Data Analysis techniques like Mapper~\cite{mapper} or Ball Mapper~\cite{ballmapper}.

In the next step we extend $d_X$ into a distance or similarity measure ${ d_\mathcal{C} : \mathcal{C}(X) \times \mathcal{C}(X) \rightarrow \mathbb R_{\ge 0} }$ defined on elements of $\mathcal{C}(X)$, as detailed in Section~\ref{sec:distances_clusters}.
The \emph{ClusterGraph} of a partition $\mathcal{C}(X)$ is a fully connected graph whose vertices are elements of $\mathcal{C}(X)$ and edges are weighted by distance $d_\mathcal{C}$ of the elements of $\mathcal{C}(X)$.

\ifdefined\showComments
    \begin{remark}
        \PD{I am storing the old text as a remark, putting the new below}
        The quality of the obtained graphs can be assessed by various criteria. In particular, in Section~\ref{sec:metric_distortion_quality_measure} we propose a \emph{metric distortion} based criteria to serve that purpose. ClusterGraph, as a fully connected graph, will often have a bad score that can be improved by removing certain edges from it.
        % as detailed in Section~\ref{sec:mdb_pruning}. 
        This way, a \emph{pruned} ClusterGraph will be obtained. We discuss different pruning strategies in Section~\ref{sec:pruning}.
    \end{remark}

    \PD{Above the old text, below the new text. Please check if it make sense}
\fi
%The ClusterGraph is mend to serve as a data visualization and compression method. Compression is achieved by collapsing points in the same elements of cluster to a vertex of the graph so that the nuber of vectices of the ClusterGraph is considerably smaller than the nuber of points. The intuition behind this step is that the points in the same cluster are located in a close proximity, and therefore ay be replaced with a single vertex of the graph. The visualization aspect is achieved when the layout of the ClustetGraph reasamble, in some sense, the laout of the input point cloud. That most likelly is not the case for a fully connected graph Therefore, in Section~\ref{sec:metric_distortion_quality_measure} we propose a \emph{metric distortion} based criteria and techniques allowing removial of certain edges from it so that the metric metric on a graph and on the inpital point cloud are becaming comparable. We discuss different strategies of removing edges from the ClusterGraph in Section~\ref{sec:pruning}. This way we obtain a \emph{pruned} ClusterGraph and use it for visualization purposes. 

ClusterGraph is intended to serve as a tool for data visualization and compression. Compression is achieved by collapsing points within the same cluster into a vertex of the ClusterGraph. This step is motivated by an assumption that points within the same cluster are in close proximity, hence share mulipe characteristics and can be prepresent by a single vertex. We expect ClusterGraph to have much fewer vertices compared to the original number of points. The visualization aspect is accomplished when the layout of the ClusterGraph resembles, to some extent, the layout of the input point cloud. However, this is unlikely to be the case for a fully connected graph. Therefore, in Section~\ref{sec:metric_distortion_quality_measure}, we propose a \emph{metric distortion}-based criterion and techniques for removing certain edges of the ClusterGraph so that the metrics on the graph and on the initial point cloud become comparable. Different strategies for edge removal from the ClusterGraph are discussed in Section~\ref{sec:pruning}. This process yields a \emph{pruned} ClusterGraph, which we use for visualization purposes.

\subsubsection{Distances between clusters}
\label{sec:distances_clusters}
In this section, for a dataset $X$ equipped with similarity measure $d_X$ and a partition $\mathcal{C}(X)$, a number of similarity measures $d_\mathcal{C} : \mathcal{C}(X) \times \mathcal{C}(X) \rightarrow \mathbb{R}_{\geq 0}$ are presented.
The choice of the optimal one is application dependent, very much like the clustering algorithm $\mathcal{C}$, and should therefore be selected and optimized by the user. Given two clusters $C_i$ and $C_j$, possible options include:

%Given two elements $C_i, C_j \in \mathcal{C}(X)$ we have implemented the following options for $d_\mathcal{C}(C_i,C_j)$;

\begin{enumerate}
    \item Maximum, minimum or an average distance between points
          \begin{align*}
              \min(C_i, C_j)                & = \min\limits_{x \in C_i , y \in C_j  } d_X(x,y)                                   \\
              \max(C_i, C_j)                & = \max\limits_{x \in C_i , y \in C_j  } d_X(x,y)                                   \\
              %\text{\DG{this makes no sense ,the distance between a cluster and itself it non zero!}} \\
              \operatorname{avg}(C_i , C_j) & = {\sum\limits_{x \in C_i} \sum\limits_{y \in C_j}  d_X(x, y)} / {(|C_i| |C_j|)} .
          \end{align*}

    \item Hausdorff distance
          \[
              d_{H}(C_i,C_j)=\max \left\{\,\sup _{x\in C_i}d(x,C_j),\,\sup _{y\in C_j}d(C_i,y)\,\right\}, \]
          where $d(a, B) = \inf\limits_{b \in B} d(a,b)$.
          %
          %\item Wasserstein distance,
          %
    \item Earth mover (a.k.a. Wasserstein) distance~\cite{emd, Wasserstein1969} which utilizes ideas from probability theory and optimal transport
          \[
              W_p(C_i, C_j) =  \inf_{\eta:C_i \rightarrow C_j} \left (\sum_{x \in C_i} d_X(x, \eta(x))^p \right)^\frac{1}{p},
          \]
          where $\eta$ is a matching between points of $C_i$ and $C_j$ and $1 \leq p < \infty$. If the two clusters have different size, the matching is computed in a weighted way, i.e. each point in $C_i$ is assigned a weight of $1/|C_i|$ such that each cluster has a total mass of 1.

\end{enumerate}

\subsubsection{Stability}

Let us consider a dataset $X$ , and two partitions of it $\mathcal{C}(X)$, $\mathcal{D}(X)$ obtained via some clustering algorithms. We are interest in quantifying how different these two partitions can be. In order to do so we introduce the following concept.

\begin{definition}[Image of a cluster]
    Let $X$ be a dataset and $\mathcal{C}(X)$, $\mathcal{D}(X)$ two partitions of it. The \emph{image} of a cluster $C_i \in \mathcal{C}(X)$ in $\mathcal{D}(X)$ is the union of all clusters of $\mathcal{D}(X)$ that contain some points of $C_i$, namely  $ \operatorname{im}_{\mathcal{D}(X)} (C_i) = \{ \bigcup D_j \in \mathcal{D}(X) \; | \; C_i \cap D_j \neq \emptyset \}$.
\end{definition}
This idea of mapping the points covered by one cluster in a given partition to the clusters in a second partition is inspired by an analogous technique for mapper graphs, \emph{MappingMappers}, described in~\cite{mappertype}.

Let us define the \emph{diameter} of a collection of points as the greatest distance between any pair of points. We can then state the following bound.

\begin{proposition}[Clustering stability]
    \label{prop:1}
    Let $X$ be a dataset and $\mathcal{C}(X)$, $\mathcal{D}(X)$ be two partitions of it such that the diameter of each set in  $\mathcal{C}(X)$ and $\mathcal{D}(X)$ is at most $\delta$. Then, for any cluster $C_i \in \mathcal{C}(X)$, its image in $\mathcal{D}(X)$ has diameter at most $3\delta$.
\end{proposition}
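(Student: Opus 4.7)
The plan is to bound the distance between any two points in $\operatorname{im}_{\mathcal{D}(X)}(C_i)$ by chaining three short hops through $C_i$, each of length at most $\delta$, and then apply the triangle inequality in $(X, d_X)$.

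More concretely, fix two arbitrary points $p, q \in \operatorname{im}_{\mathcal{D}(X)}(C_i)$. By definition of the image, there exist clusters $D_j, D_k \in \mathcal{D}(X)$ with $p \in D_j$, $q \in D_k$, and such that $C_i \cap D_j \neq \emptyset$ and $C_i \cap D_k \neq \emptyset$. The first step is to pick witnesses: choose $x \in C_i \cap D_j$ and $y \in C_i \cap D_k$. The second step uses the diameter hypothesis three times: since $p, x \in D_j$ we get $d_X(p,x) \le \delta$; since $x, y \in C_i$ we get $d_X(x,y) \le \delta$; and since $y, q \in D_k$ we get $d_X(y,q) \le \delta$. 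The third step is the triangle inequality, giving $d_X(p,q) \le d_X(p,x) + d_X(x,y) + d_X(y,q) \le 3\delta$. Since $p, q$ were arbitrary, the diameter of the image is at most $3\delta$.

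There is no real obstacle here; the argument is a direct unpacking of the definition of the image and one application of the triangle inequality. The only subtle point worth flagging is that the bound $3\delta$ is tight in general: one genuinely needs one hop inside $D_j$, one hop inside $C_i$ to transfer between the two witnesses, and one hop inside $D_k$, so the factor $3$ cannot be improved without additional assumptions (such as $D_j = D_k$, which would immediately reduce it to $\delta$, or overlap structure between $\mathcal{C}(X)$ and $\mathcal{D}(X)$). The statement holds in exactly the same form when $\mathcal{C}(X)$ and $\mathcal{D}(X)$ are divisions rather than partitions, since the argument never uses disjointness.
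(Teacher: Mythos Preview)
Your proof is correct and follows essentially the same approach as the paper: pick witnesses in $C_i$ lying in the same $\mathcal{D}$-clusters as the two endpoints, then chain three hops each bounded by $\delta$ via the triangle inequality. The only cosmetic difference is that the paper works directly with the diameter-realizing pair rather than arbitrary $p,q$, but the argument is identical.
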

\begin{proof}
    Let $d_1, d_2 \in  X$ be the two points whose distance realizes the diameter of $\operatorname{im}(C_i)$. By definition of image there is at least one point $c_1 \in C_i$ which lies in the same cluster of $\mathcal{D}(X)$ as $d_1$, and similarly there exist at least one $c_2$ for $d_2$. Therefore we have
    \[
        d_X(d_1, d_2) \leq d_X(d_1, c_1) + d_X(c_1, c_2) + d_X(c_2, d_2) \leq 3\delta \; .
    \]
\end{proof}

Using a summary statistic of the distance between points as distance between clusters (option 1 in \ref{sec:distances_clusters}) allows us to derive a similar bound for ClusterGraphs built on top of $\mathcal{C}(X)$ and $\mathcal{D}(X)$.

\begin{definition}[Image of a ClusterGraph vertex]
    Let $X$ be a dataset, $\mathcal{C}(X)$, $\mathcal{D}(X)$ two partitions of it and $G_{\mathcal{C}(X)}$, $G_{\mathcal{D}(X)}$ the ClusterGraphs obtained from $\mathcal{C}(X)$ and $\mathcal{D}(X)$.
    The \emph{image} of a vertex $i \in G_{\mathcal{C}(X)}$ (corresponding to cluster $C_i \in G_{\mathcal{C}(X)}$) in $G_{\mathcal{D}(X)}$ is the collection of all vertices of $G_{\mathcal{D}(X)}$ that corresponds to clusters in $G_{\mathcal{D}(X)}$ containing some points of $C_i$.
\end{definition}

We define the \emph{diameter} of a weighted graph as the greatest distance between any pair of vertices.

\begin{proposition}[ClusterGraph stability]
    Let $X$ be a dataset, $\mathcal{C}(X)$, $\mathcal{D}(X)$ two partitions of it and $G_{\mathcal{C}(X)}$, $G_{\mathcal{D}(X)}$ the ClusterGraphs obtained from them.
    Assume that the diameter of each set in  $\mathcal{C}(X)$ and $\mathcal{D}(X)$ is at most $\delta$.
    Then the image of each vertex  $u \in G_{\mathcal{C}(X)}$ in $G_{\mathcal{D}(X)}$ is a clique of diameter at most $3\delta$ for the maximum and average distance, and $\delta$ for the minimum.
\end{proposition}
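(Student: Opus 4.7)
The plan is to reduce the proposition to an edge-weight bound and then invoke essentially the same triangle-inequality argument that established Proposition~\ref{prop:1}. The three cases (maximum, average, minimum) only differ in how the induced distance on vertices is computed, so once the point-level bound is in hand the three conclusions follow uniformly.

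First, I would dispose of the ``clique'' claim: by definition the ClusterGraph on a partition is fully connected, so any collection of vertices of $G_{\mathcal{D}(X)}$ automatically induces a complete subgraph. In particular the image of $u$ is a clique, and moreover its graph diameter (maximum over shortest-path distances) is bounded above by the maximum weight of a direct edge between two of its vertices. Hence it suffices to bound $d_{\mathcal{D}}(D_j, D_k)$ for any two clusters $D_j, D_k \in \mathcal{D}(X)$ whose vertices both lie in $\operatorname{im}_{G_{\mathcal{D}(X)}}(u)$.

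Next I would pick witnesses exactly as in the proof of Proposition~\ref{prop:1}: since $D_j$ and $D_k$ both meet $C_i$, choose $c_1 \in C_i \cap D_j$ and $c_2 \in C_i \cap D_k$. For arbitrary $x \in D_j$ and $y \in D_k$ the triangle inequality gives
\[
    d_X(x,y) \;\le\; d_X(x, c_1) + d_X(c_1, c_2) + d_X(c_2, y) \;\le\; 3\delta,
\]
using that $x,c_1 \in D_j$, $c_1,c_2 \in C_i$, $c_2,y \in D_k$ and that each of these sets has diameter at most $\delta$. This immediately yields $\max(D_j, D_k) \le 3\delta$, and since the average of numbers each bounded by $3\delta$ is itself bounded by $3\delta$, also $\operatorname{avg}(D_j, D_k) \le 3\delta$. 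For the minimum case I would observe that the specific pair $(c_1, c_2)$ lies inside $C_i$, hence $d_X(c_1, c_2) \le \delta$, which gives $\min(D_j, D_k) \le \delta$ directly without needing the full three-step chain.

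There is no genuine obstacle here; the only subtle point worth stating explicitly is that the graph diameter is measured by shortest paths rather than by edge weights, but since the ClusterGraph is complete the direct edge between any two image vertices is always a candidate path, so the edge-weight bound transfers to a diameter bound. I would also remark in passing that the min-case constant $\delta$ cannot in general be improved to $0$ since $c_1 \ne c_2$ is possible, and that the factor $3$ in the max/average cases is tight by the same example used to make Proposition~\ref{prop:1} tight.
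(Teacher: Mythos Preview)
Your proposal is correct and follows essentially the same approach as the paper: both reduce to bounding the edge weight between any two clusters $D_j, D_k$ in the image via witnesses $c_1 \in C_i \cap D_j$, $c_2 \in C_i \cap D_k$ and the three-term triangle inequality, then handle the average by domination and the minimum by the single pair $(c_1,c_2)$. Your version is marginally more explicit in noting that shortest-path diameter in a complete graph is bounded by the direct edge weight, and in bounding all pairs $x,y$ uniformly rather than only the maximizing pair, but these are stylistic rather than substantive differences.
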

\begin{proof}
    Let $u$ be a vertex in $G_{\mathcal{C}(X)}$ and $\operatorname{im}(u)$ its image in $G_{\mathcal{D}(X)}$.
    % Let $v_i$ and $v_j$ be the vertices in $G_{\mathcal{D}(X)}$ corresponding to clusters $D_i$ and $D_j$ in $\mathcal{D}(X)$ whose distance is the diameter of $\operatorname{im}(u)$. 
    Recall that $\operatorname{im}(u)$ is a subset of the complete graph $G_{\mathcal{D}(X)}$, therefore it is a clique.
    Let $D_i$ and $D_j$ be the clusters in $\operatorname{im}(u)$ whose distance realizes the diameter of $\operatorname{im}(u)$, i.e. they correspond to the two vertices in the clique that are the furthest apart.

    Let us start with the maximum distance case.
    In particular, let $d_1 \in D_i$ and $d_2 \in D_j$ be the two data points realizing the maximum distance between $D_i$ and $D_j$, and therefore $d(d_1, d_2) = \operatorname{im}(u)$. Let $C_u$ be the cluster in $\mathcal{C}(X)$ corresponding to vertex $u \in G_{\mathcal{G}(X)}$. By definition of image of $u$ there are at least two points $c_1, c_2 \in C_u$ which lie in the same clusters of $\mathcal{D}(X)$ as $d_1$ and $d_2$, respectively.
    We can then proceed in a similar fashion to the proof of Proposition~\ref{prop:1}, namely we have
    \[
        \operatorname{diam}(\operatorname{im}(u)) = \max(D_i, D_j) = d_X(d_1, d_2) \leq d_X(d_1, c_1) + d_X(c_1, c_2) + d_X(c_2, d_2) \leq 3\delta \; .
    \]
    The same bound holds for the average distance since $\operatorname{avg}(D_i , D_j) \leq \max(D_i, D_j)$.

    For the minimum distance case it is sufficient to notice that $d_X(c_1, c_2) \leq \delta$ because they both belong to the same cluster $C_i$ whose diameter is bounded by $\delta$. Hence we have
    \[
        \operatorname{diam}(\operatorname{im}(u)) = \min(D_i, D_j) \leq d_X(c_1, c_2) \leq \delta .
    \]
\end{proof}

%%%%%%%%%%%%%%%%%%%%%%%%%%%%%%%%%%%%%%%%%%%%%%%%%%%%%%
\subsection{Metric distortion}
\label{sec:metric_distortion_quality_measure}

Given a dataset $X$, different choices of the clustering algorithm $\mathcal{C}$, as well as the metrics $d_X$ and $d_\mathcal{C}$ can lead to significantly different ClusterGraphs. The aim of this section is to introduce a score to assess the quality of a given
ClusterGraph $G$ by comparing it to the underlying geometric structure of the dataset $X$.

For this purpose let us assume that the considered point cloud $X$ is sampled from a compact and connected manifold $\mathcal{M}$ equipped with an  \emph{intrinsic distance} $d_{\mathcal{M}}$. Informally, the intrinsic distance between two points $x,y \in \mathcal{M}$ is  defined to be the infimum of the length of a curve $\gamma \subset \mathcal{M}$ joining $x$ and $y$, this is also known as \emph{geodesic} distance.
%Consider Fig~\ref{fig:intuition}(e) as a simple example. Taking the endpoints of the C-letter shape, their distance is the sum of lengths of the three line segments constituting the shape. 
In most applications, the underlying manifold is not known.
Consequently, the intrinsic distance needs to be estimated from the point cloud. This is a well-studied problem in computational geometry and computer graphics, and multiple methods have been proposed~\cite{isomap, klein_point_2004, ruggeri_approximating_2006,yu_geodesics_2014}.
% we have chosen to use shortest path distances on k-nearest neighbor graph of the point cloud $X$ as described in~\cite{isomap}. 
Below, we follow the approach of~\cite{isomap, Bernstein2000GraphAT} using the shortest path in the $k$-nearest neighbor graph as estimator.
Note that any other estimator of intrinsic distance can be also used in the proposed construction.

Let $G_{knn}(X)$ be the $k$-nearest neighbor graph on $X$ constructed in the following way: each point of $X$ corresponds to a vertex of $G_{knn}(X)$; it is connected to its $k$-nearest neighbors (in the chosen distance $d_X$, typically Euclidean), with $k$ being a parameter of the method. Weights corresponding to the distance between endpoints are assigned to the edges of $G_{knn}(X)$. We define a distance $d_X^k$ on $G_{knn}(X)$, estimating the intrinsic distance on $X$, as
\begin{equation}
    d_X^k(x,y) =  \text{the length of the shortest path between $x$ and $y$ in $G_{knn}(X)$ }
\end{equation}

\begin{remark}~\label{rmrk2}
    It may happen that $G_{knn}(X)$ is not connected.
    \ifdefined\showArXiv
        There are two possible reason for this, depicted in Fig.~\ref{fig:nn_example}.
    \else
        There are two possible reason for this.
    \fi
    In the first case  points are indeed sampled from a compact and connected manifold but the parameter $k$ is too low. This can be easily solved by increasing $k$. In the second case the underlying manifold is not connected. This will result in the $k$-nn graph being disconnected even for very high values of $k$, especially if many points are sampled. In this case, we will threat each connected component separately, splitting the input dataset $X$ (and the output of the clustering algorithm) into disjoint sets, each one corresponding to a different connected component and analyze each of them separately\footnote{By performing the construction in Sec.~\ref{sec:cg} we obtain a ClusterGraph for each connected component, each of them being a fully connected graph. In graph theory such a disjoint union of complete graphs is sometimes called a \say{cluster graph}. This unexpected but pleasing agreement in nomenclature motivates our choice of referring to our construction in camel case, to avoid confusion.}. For the rest of the Section we  therefore assume, without lack of generality, that the $k$-nn graph is fully connected. We discuss how to investigate the relations between different connected components in Section~\ref{sec:connectivity}.\\
\end{remark}

\begin{remark}~\label{rmrk1}
    Whenever an estimator is used, it is natural to ask how good such estimator is. The choice of a $k$-nn graph as an estimator of the geodesic distance is motivated by the following theorem by Bernstein,  Vin de Silva, Langford and Tenenbaum.

    \begin{theorem}[Theorem A in~\cite{Bernstein2000GraphAT}]

        Let $\mathcal{M}$ be a compact submanifold of $\mathbb{R}^n$, $X$ a finite set of data points in $\mathcal{M}$ and $G$ a graph on $X$ (for example, a $k$-nn graph). Then the inequalities
        \[
            (1-\lambda_1) d_\mathcal{M}(x, y) \leq d_G(x,y) \leq (1+\lambda_2) d_\mathcal{M}(x, y)
        \]
        are valid for all $x,y$ in $X$, where $\lambda_1, \lambda_2 < 1$ are two positive real numbers that depends on $G$, $\mathcal{M}$ and some technical assumptions on the density of $X$.

    \end{theorem}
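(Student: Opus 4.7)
The plan is to prove the two inequalities separately, with the lower bound following from a local comparison of Euclidean chords to geodesic arcs along a shortest graph path, and the upper bound from constructing a graph path that tracks a minimizing geodesic in $\mathcal{M}$ closely. Throughout, I would assume (and later make explicit) a density condition on $X$ — for instance that $X$ is a $\delta$-net in $\mathcal{M}$ — together with a compatible edge rule for $G$, such as ``all pairs of points at Euclidean distance at most $\epsilon$ are joined by an edge,'' which is satisfied by a $k$-nn graph for $k$ large enough relative to $\delta$.

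For the lower bound, I would fix any shortest path $x = x_0, x_1, \dots, x_m = y$ in $G$, so that $d_G(x,y) = \sum_i \|x_i - x_{i+1}\|$, and compare termwise against $d_\mathcal{M}(x,y) \le \sum_i d_\mathcal{M}(x_i, x_{i+1})$ (obtained by concatenating geodesic arcs). The key local estimate is a chord--arc inequality: if $\mathcal{M}$ has positive reach $\tau$ (or bounded sectional curvature), then for points $p,q \in \mathcal{M}$ with $\|p-q\| \le \epsilon$ small relative to $\tau$ one has $\|p-q\| \ge (1 - C\epsilon^2/\tau^2)\,d_\mathcal{M}(p,q)$. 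Since every edge in a reasonable neighborhood graph has length at most $\epsilon$, setting $\lambda_1 := C\epsilon^2/\tau^2$ and summing closes the lower bound.

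For the upper bound, I would parametrize a minimizing geodesic $\gamma:[0,L]\to \mathcal{M}$ of length $L = d_\mathcal{M}(x,y)$, partition it at $0 = t_0 < t_1 < \dots < t_N = L$ so that each subarc has length at most some small $\eta$, and use the $\delta$-net property to pick $x_{i_j} \in X$ with $\|\gamma(t_j) - x_{i_j}\| \le \delta$. Provided $\delta$ and $\eta$ are chosen so that consecutive chosen vertices are within the edge threshold $\epsilon$ of $G$, the telescoping inequality $d_G(x,y) \le \sum_j \|x_{i_j} - x_{i_{j+1}}\|$ is valid. Two triangle inequalities at each junction, plus the same chord--arc comparison used above, bound each summand by $\mathrm{length}(\gamma|_{[t_j,t_{j+1}]}) + 2\delta$, and since the number of segments is $N = \lceil L/\eta \rceil$, the total overhead is of order $\delta/\eta$. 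This gives $d_G(x,y) \le (1+\lambda_2) L$ for an explicit $\lambda_2$ controlled by $\delta$, $\eta$ and $\tau$.

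The main obstacle is the simultaneous calibration of the three scales — the sampling scale $\delta$, the graph edge length $\epsilon$, and the manifold's reach $\tau$ — so that both bounds hold with $\lambda_1,\lambda_2 < 1$ under a single, clean hypothesis on $X$ and $G$. In particular, one needs $\delta \ll \tau$ so that the chord--arc lemma is effective, $\epsilon \gtrsim \delta$ so that the net's near-neighbors are always connected in $G$, and $\epsilon \ll \tau$ so that edges do not shortcut across the normal bundle of $\mathcal{M}$. Formalizing these into a single density/sampling assumption and carefully propagating the constants through the chord--arc estimate is where the genuine work lies; the rest is bookkeeping with the triangle inequality.
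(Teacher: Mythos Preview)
The paper does not provide its own proof of this statement: it is quoted verbatim as Theorem~A of Bernstein, de~Silva, Langford and Tenenbaum~\cite{Bernstein2000GraphAT}, inside a remark motivating the use of the $k$-nn graph, and no argument is given. So there is nothing in the paper to compare your proposal against.

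That said, your sketch is essentially the strategy of the original Bernstein et~al.\ proof: the lower bound via a termwise chord--arc comparison along a shortest graph path (with the geometry of $\mathcal{M}$ entering through a minimum radius of curvature / reach parameter), and the upper bound by walking a minimizing geodesic, dropping to nearby sample points via a $\delta$-density assumption, and checking that consecutive samples fall within the graph's edge rule. Your identification of the three scales $\delta$, $\epsilon$, $\tau$ and the need to calibrate them simultaneously is exactly where the technical hypotheses in the original statement come from. The only thing to flag is that your proposal is a plan rather than a proof: the chord--arc lemma and the precise sampling condition (e.g.\ that every point of $\mathcal{M}$ is within geodesic distance $\delta$ of some sample, together with $4\delta < \epsilon$ and $\epsilon$ small relative to the minimum radius of curvature and branch separation) would need to be stated and verified, but the architecture is correct.
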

\end{remark}

\ifdefined\showArXiv
    \begin{figure}[h]
        \centering
        \includegraphics[width=0.7\linewidth]{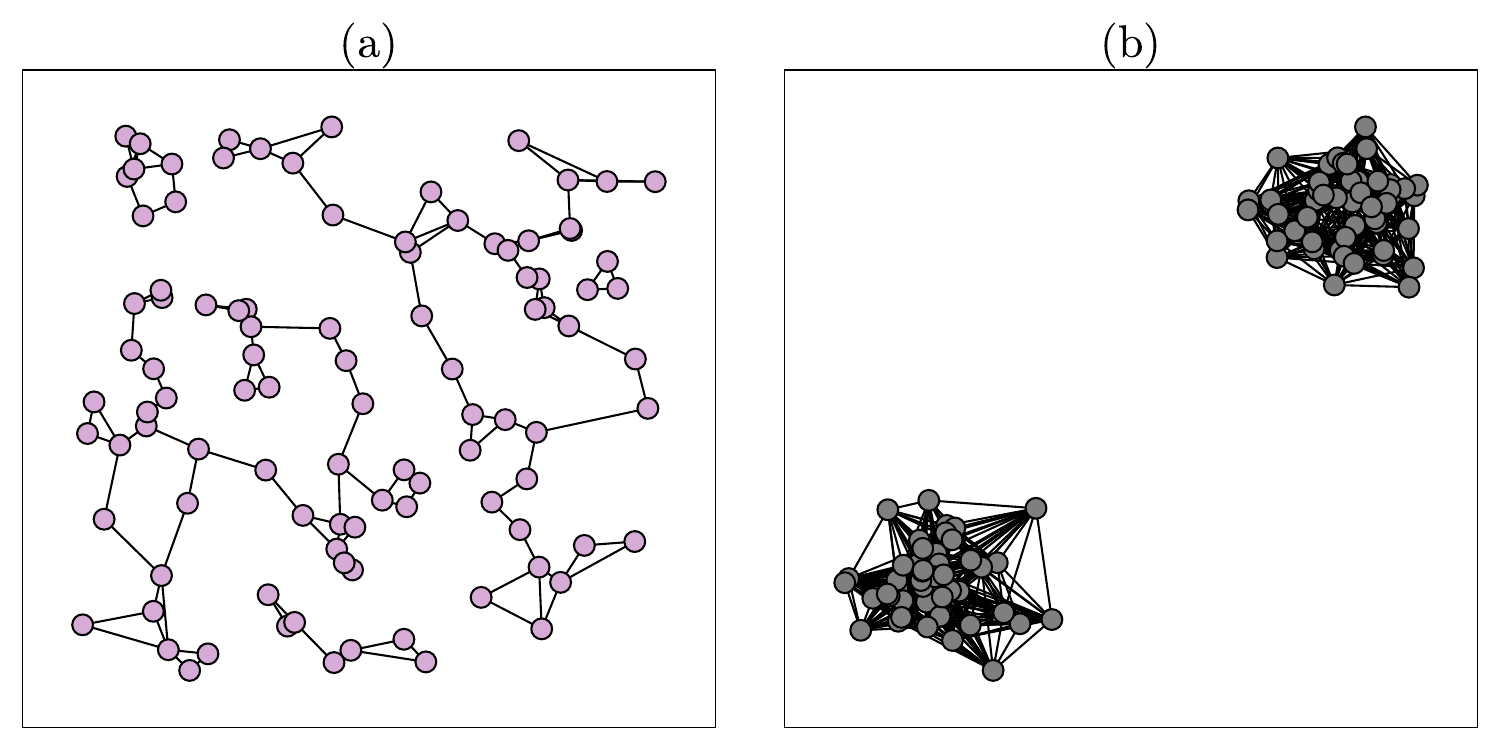}
        \caption{Two examples of a disconnected $k$-nn graph. Both panels contain $100$ points sampled from the unit square. In panel (a) the points are sampled from the uniform distribution and $k=2$. In panel (b) the points are sampled from two normal distributions centered in $(0.25, 0.25)$ and $(0.75, 0.75)$ with variance $0.1$ and $k=20$.}
        \label{fig:nn_example}
    \end{figure}
\fi

For each point $x \in X$, we denote with $C_x$ the cluster in $\mathcal{C}(X)$ that contains $x$. We can then use the distance between clusters described in \ref{sec:distances_clusters} to define a distance between points $d_{CG}$ in the ClusterGraph $G$ as following
\begin{equation}
    % d_{CG}(x,y) \coloneqq d_\mathcal{C}(C_x, C_y) .
    d_{CG}(x,y) = \text{the length of the shortest path between $C_x$ and $C_y$ in $G$. }
    \label{def:d_cg}
\end{equation}
Recall that $G$ is fully connected; one could wonder why the length of the shortest path between two vertices is used instead of the weight of the edge connecting them given by $d_\mathcal{C}$. First, the triangle inequality might not hold for $d_\mathcal{C}$. Secondly, we want the definition of $d_{CG}$ to hold also in the case of a pruned ClusterGraph, which we will discuss in Section~\ref{sec:pruning}.

This notion of $d_{CG}$ is well defined only when $\mathcal{C}(X)$ is a partition. In the more general case of a division, when a point can belong to more than one cluster, we take $d_{CG}(x,y)$ to be the length of the shortest path between any cluster containing $x$ and any cluster containing $y$.

Consider the ClusterGraph $G = (V, E)$ and let us fix two vertices $i,j \in V$, corresponding to two clusters $C_i$ and $C_j$.
For a pair of points $x \in C_i$ and  $y \in C_j$, $x \neq y$, we compute:
\begin{equation}
    \delta(x,y) = \abs{ \log \left(\frac{ d_{CG}(x,y) }{ d_X^k(x,y) }\right) } .
    \label{eq:distortion_pts}
\end{equation}
\ifdefined\showComments
    \AJL{why log here? Why are we penalizing more the cases where $d_{CG}$ is smaller than $d^k_X$. It can be either way, isn't it?}
    \PD{The use of the absolute value of the logarithm ensures that multiplicative scaling by a factor $\lambda$ between $d_{CG}(x,y)$ and $d_X^k(x,y)$ (in which case, $\frac{ d_{CG}(x,y) }{ d_X^k(x,y) } = \lambda$) as well as between $d_X^k(x,y)$ and $d_{CG}(x,y)$ (in which case, $\frac{ d_X^k(x,y) }{ d_{CG}(x,y) } = \lambda$) result in the same metric distortion $\log(\lambda)$.}
    \AJL{this may be a naive question. Why this need to be an intrinsic distance. I realize that for this formula it makes sense as they are both representing length of a path, but could it not be done some how if the dominator represent euclidean distance instead? Why does it make more sense to do it this way?}
\fi
The use of the absolute value of the logarithm ensures that a multiplicative scaling by a factor $\lambda = \tfrac{ d_{CG}(x,y) }{ d_X^k(x,y)} $ result in the same metric distortion as a $1/\lambda$ scaling, i.e. $\abs{\log(x/y)} = \abs{\log(y/x)}$ for every $x,y \in \mathbb{R}_{>0}$.

By averaging this quantity over all possible pairs of points $x \in C_i$ and $y \in C_j$, we obtain the metric distortion between clusters $C_i$ and $C_j$.

\begin{equation}
    \delta_{\{i,j\}}  = \frac{1}{|C_i| |C_j|} \sum\limits_{ (x,y) \in (C_i,C_j) }\delta(x,y)
    \label{eq:distortion_path}
\end{equation}
This score assesses how much the intrinsic distance between points on $X$ differs from their corresponding distance in the ClusterGraph.

The global metric distortion of the considered ClusterGraph can be obtained by averaging the score for each pair of vertices defined in Equation \ref{eq:distortion_path}. However, because clusters can have different sizes, we consider a weighted average. The weight of the pair $\{i,j\}$ corresponding to clusters $C_i$ and $C_j$ is defined as:
\begin{equation}
    w_{\{i,j\}} =  \frac{ |C_i \cup C_j|}{ (n-1) |X|  } ,
    \label{eq:weight_edge}
\end{equation}
where $n$ denotes the number of clusters in $\mathcal{C}(X)$ (or equivalently $n = |V|$, the number of vertices in the ClusterGraph). The rationale behind this averaging is to give more importance to interactions between large clusters.
% Further details are presented in the Appendix~\ref{app:proof_w}. 
The global metric distortion for a ClusterGraph (with respect to the $k$-nearest neighbors graph) can then be defined as
\begin{equation}
    \Delta_k(G)  = \frac{2}{n(n-1)} \sum\limits_{ \{i,j\} \in V } w_{\{i,j\}} \delta_{\{i,j\}} .
    \label{eq:global_metric_distortion}
\end{equation}
% \DG{this final paragraph needs to be rewritten}
% \hl{
% This quantity measures how well the ClusterGraph structure approximates a distance between points grouped in two clusters and thanks to it, we may informally say that it allows to detect "shortcuts" in the graph. Its integrated and weighted version presented in Equation}~\ref{eq:global_metric_distortion} \hl{gives a global score of the ClusterGraph. It indicates how well, in a metric sense, the given cluster graph represents the intrinsic metric structure of the data. As such, it can be treated as a p-value} \DG{really?? that's a strong claim} 
% \hl{
% attached to the ClusterGraph. In the next section we will use both the distortion of every single edge, and the global distortion of the graph, in the pruning process, aiming to remove certain edges so that the graph fits better the metric structure of the data (and therefore have a better score). 
% }

This quantity is a non-negative real number indicating how well the given ClusterGraph respects the intrinsic metric structure of the data.
\ifdefined\showComments
    \DG{do we want to add a comment about comparing the MD of two CG if they have equal number of clusters?}
\fi
It allows to compare the quality of ClusterGraphs having an equal, or very similar number of vertices. Comparison of metric distortions of ClusterGraphs having vastly different number of nodes should not be performed.

% Since it depends on the size of the dataset as well as the number of clusters, it should be used in a relative way when comparing different partitions of the same datasets with the same number of clusters. 
% \todo[inline]{is relative a good word? I want to say that it is not an absolute score, two distortions for two different datasets are not comparable}
In the next sections we will use each edge's distortion  as well as the global distortion to prune the ClusterGraph; with the goal of removing the edges that do not reflect the underlying structure of the data.
\\
\begin{remark}
    The ratio between two distances in our definition of the distortion (Eqn.~\ref{eq:distortion_pts}) might be reminiscent of the \emph{stretch factor} or \emph{distortion} of an embedding $f$. For two given points $x$ and $y$ in a metric space the stretch factor is defined as ${d(f(x) ,f(y))}/{d(x,y)}$.
    For example, consider a set of points in $\mathbb{R}^d$ and a connected graph having those points as vertex set. Each edge in the graph has a weight corresponding to the Euclidean distance between its endpoints.
    The stretch factor for two given points is the ratio of the length of the shortest path between them in the graph to their Euclidean distance. The stretch factor of the graph is the maximum stretch factor over any pair of points. Graphs with stretch factor at most $t$ are called \emph{$t$-spanners}~\cite{narasimhan_geometric_2007}.

    It is important to point out the differences between this widely studied topic in graph theory and our approach. First of all, we are not dealing with an embedding as the map that sends each data point to its cluster is highly non injective. Moreover, the stretch factor of a graph defined in the paragraph above is always greater than or equal to $1$. In our setting, the ratio between the ClusterGraph distance and the intrinsic one (Eqn.~\ref{eq:distortion_pts}) might be less than $1$, this is exactly the case of a \say{shortcut} edge in the ClusterGraph.
\end{remark}
\subsection{ClusterGraph pruning}
\label{sec:pruning}
ClusterGraph is, by definition, a fully connected graph. Consequently it may contain edges connecting regions of the datasets that are not close in the manifold $\mathcal{M}$ from which the data points of $X$ are sampled. We will refer to these edges as \say{shortcuts}, as they are shorter than the true geodesic distance between the corresponding points in the manifold and therefore are not representative of the underlying manifold structure.
The removal of such edges will make the ClusterGraph more similar, in the sense of metric distortion, to $X$. Moreover, the ClusterGraph may also contain edges the removal of which does not change considerably the metric structure of the graph. In this section we introduce three possible ideas to prune edges of a given ClusterGraph, and by doing so, of increasing the quality of the obtained representation.
% \begin{enumerate}
%     \item Greedy pruning, introduced in Section~\ref{sec:greedy_pruning},
%     \item Metric-distortion-based (MDB) pruning, introduced in Section~\ref{sec:mdb_pruning}
%     \item Distance-based (DB) pruning, introduced in Section~\ref{sec:distance_based_prunung}.
% \end{enumerate}
%
%
%
%
\subsubsection{Threshold pruning}
\label{sec:greedy_pruning}
If the triangle inequality holds for the distance between clusters $d_\mathcal{C}$, the length of the shortest path between each pair of vertices in the ClusterGraph (Eqn.~\ref{def:d_cg}) is exactly the length of the edge connecting them. It makes sense then to assign to each edge in the ClusterGraph the metric distortion for its two corresponding clusters, as defined in Equation~\ref{eq:distortion_path}.
% Intuitively, edges with large distortion can be considered \say{shortcuts} that are not representative of the underlying manifold structure from which the point cloud $X$ is sampled. 
The ClusterGraph can then be naively pruned by removing all edges having metric distortion greater than a threshold $\alpha > 0$.
\subsubsection{Iterative greedy pruning}
\label{sec:mdb_pruning}

Consider the ClusterGraph $G =(V,E)$. Let $\Delta_k(G)$ be its metric distortion as in Equation~\ref{eq:global_metric_distortion}. Denote with $G_{\hat{e}}$ the ClusterGraph obtained by removing edge $e$ from $G$, namely $G_{\hat{e}} = (V, E \setminus \{e\})$.
% Given that the distortion in Equations~\ref{eq:weight_edge} and \ref{eq:global_metric_distortion} are defined for a fully connected ClusterGraph, we need to generalize them to the non-complete case in the following way
% \begin{equation}
%     w_{\{i,j\}} =  \frac{ |C_i| + |C_j|}{ \sum_{v \in V} \deg(v) |C_v| } ,
% \label{eq:weight_edge_new}
% \end{equation}
% and
% \begin{equation}
%     \Delta_k(G)  = \frac{1}{|E|} \sum\limits_{ e \in E } w_e \delta_e .
%     \label{eq:global_metric_distortion_new}
% \end{equation}
% It is straightforward to see that one recovers Equations~\ref{eq:weight_edge} and \ref{eq:global_metric_distortion} in the special case of a fully connected graph with $n$ vertices. Moreover, the fact that the weights $w_{\{i,j\}}$ sum up to $1$ is a direct consecuence of the degree sum formula $\sum_{v \in V} \deg(v) = 2 |E|$.

We can then perform the following iterative greedy pruning procedure.
% compute $\Delta_k(G_{\hat{e}})$ for each $e \in E$. Then 
Remove edge $e$ if both conditions hold:
\begin{enumerate}
    \item $\Delta_k(G_{\hat{e}}) \leq \Delta_k(G)$,
    \item $\Delta_k(G_{\hat{e}}) \leq \Delta_k(G_{\hat{e}'})$ for any other $e' \in E$.
\end{enumerate}
Then update $E$ to $E \setminus \{e\}$.
The process may be repeated a fixed number of times, or as long as such an edge $e$ can be found. Note that the length of the shortest path between two vertices defined in Equation~\ref{def:d_cg} will be infinite if the ClusterGraph becomes disconnected, thus leading to an infinite value of the metric distortion. Therefore condition (1) ensures that the pruning procedure will never produce new connected components of the ClusterGraph.

% \begin{algorithmic}

% \For{$e_0 \in E$}
% {
%     \If{$\Delta_k(G_{\hat{e_0}}) \leq \Delta_k(G)$ } 
%         \If{$\Delta_k(G_{\hat{e_0}}) \leq \Delta_k(G_{\hat{e'}})$ for any other $e' \in E$}
%         \State $E \gets E \setminus \{e_0\}$
%         \State \textbf{break}
%     \EndIf
%     \EndIf 
% }
% \EndFor
% \end{algorithmic}

% In this approach, iterativelly, for every edge $e \in E$ we compute the metric discotrtion $D_e$ of $G$ with $e$ removed. An edge $e$ is removed from $G$ if;
% \begin{enumerate}
%     \item $D \geq D_e$,
%     \item $D_e \leq D_{e'}$ for any other $e' \in E$.
% \end{enumerate}
% The process may be repeated a fixed number of times, or as long as such an edge $e$ can be found.

\subsubsection{Connectivity based pruning}
\label{sec:connectivity}

The first two pruning techniques presented in Sections~\ref{sec:greedy_pruning} and~\ref{sec:mdb_pruning} focus on the removal of the edges with high metric distortion or, informally speaking, the removal all the \say{shortcuts} with respect to the structure of $X$. It might happen that after this pruning the obtained ClusterGraph still has a complicated structure which might render its visualization and interpretation challenging.

In what follows we adopt the \emph{connectivity based} approach by Zhou, Mahler and Toivonen~\cite{algorithms_prune} to the ClusterGraph pruning. A \emph{path} $P$ in $G=(V, E)$ is a set of edges $P = \{ \{i_1, i_2\}, \{ i_2, i_3 \}, \dots, \{ i_{k-1}, i_k \}  \} \in E$. A \emph{path quality function} $q(P) \rightarrow \mathbb{R}^+$ is defined by taking the sum of the inverse of the path length, calculated using the distance between clusters $d_\mathcal{C}$
\begin{equation}
    q(P) = \sum_{ \{i,j\} \in P} \frac{1}{d_\mathcal{C}(C_i, C_j)} .
\end{equation}

The \emph{connectivity} between two vertices $i,j$ in $G=(V,E)$ is the quality of the best path between them
\begin{equation}
    \operatorname{conn}(i,j;E) =
    \begin{cases}
        \max_{P \in \mathcal{P}(i, j)} q(P) & \text{if } \mathcal{P}(i, j) \neq \emptyset \\
        -\infty                             & \text{otherwise}
    \end{cases}
\end{equation}
where we denoted by $\mathcal{P}(i, j)$ the set of all possible paths between vertices $i$ and $j$.
The \emph{connectivity of a ClusterGraph} is the average connectivity over all pairs of vertices
\begin{equation}
    \operatorname{conn}(V,E) = \frac{2}{n(n-1)} \sum_{i,j \in V, i \neq j}\operatorname{conn}(i,j;E),
\end{equation}
where $n$ is the number of vertices in the ClusterGraph. Note that the connectivity will be $-\infty$ if $G$ is disconnected. In that case each connected component should be analyzed separately.

Let us now consider the ClusterGraph with one edge removed $G_{\hat{e}} = (V, E \setminus \{e\})$. It is straightforward to see that $\operatorname{conn}(V, E \setminus \{e\}) \leq \operatorname{conn}(V,E)$. In particular ${\operatorname{conn}(V, E \setminus \{e\}) = \operatorname{conn}(V,E)}$ if and only if $e$ does not belong to any of the best paths between any pairs of vertices. Moreover, $\operatorname{conn}(V, E \setminus \{e\}) =  -\infty$ if the removal of $e$ disconnects the graph. We can then define the \emph{ratio of connectivity kept} after removing an edge or, more generally, after removing a subset of edges $E_R \subset E$
\begin{equation}
    \mathit{rk}(V, E, E_R) = \frac{\operatorname{conn}(V, E \setminus E_R)}{\operatorname{conn}(V, E)}.
\end{equation}

Pruning can then be executed in a greedy iterative fashion (see Alg. 2 BF in~\cite{algorithms_prune}) by selecting, at each iteration, the edge whose the removal will result in smallest decrease of connectivity, i.e. the largest $\mathit{rk}$ value.

\subsection{Merging}
\label{sec:merging}
As discussed in Remark~\ref{rmrk2}, it might happen that the underlying manifold from which the data are sampled is disconnected.
In that case, the resulting ClusterGraph will have more than one connected component and each of them will be pruned separately.
In order to capture the global layout of the data, including the disconnected components of the manifold, we may merge different components by adding a collection of special edges between each vertex $v$ and its $k$-nearest neighbors not belonging to the same connected component as $v$.
Subsequently, the connectivity based pruning procedure can be applied to the newly added edges.

% \MH{Subsequently, the connectivity based pruning procedure can be applied to the edges added from the the merging process in order to get a clearer visualization of the relationship between those components. } \MH{after the merge, if the goal is to observe the intrinsic organization, we only prune the added edges
% }

% \subsection{Pipeline}
% \label{sec:pipeline}
% \DG{do we need such section?}

% \begin{algorithm}
% \caption{ClusterGraph pipeline}\label{alg:cap}
% \begin{algorithmic}
% \Require $\mathcal{C}$, intrinsic distance estimator
% \State Build the ClusterGraph $G$
% \For{each connected component $G_i$ of $G$}

%     \State Metric distortion pruning of $G_i$

% \EndFor
% \State Merge connected components
% \State Connectivity based pruning
% \end{algorithmic}
% \end{algorithm}

\section{Results}
\label{sec:results}

\ifdefined\showArXiv
    \subsection{Handwritten digits}

    We analyzed a copy of the test set of the UCI ML hand-written digits datasets~\cite{digits}, comprised of $1797$ grayscale images each one depicting a single digit. We partitioned the dataset into $100$ clusters using $k$-means. A ClusterGraph was built on top of the output of $k$-means using the average distance between points as distance between the corresponding clusters. The obtained ClusterGraph is therefore a fully connected graph with $4950$ edges. We used threshold pruning to simplify the graph, removing edges longer than a certain threshold $\alpha$. The resulting pruned ClustergGraphs for $\alpha=39$ and $\alpha=35$ are depicted in Fig.~\ref{fig:digits}.  $\alpha=39$ is the largest threshold for which the pruned graph is connected, while $\alpha=35$ produces a clear separation between clusters congaing different digits. Interactive visualization of such pruned ClusterGraphs allowing the user to inspect the images contained in each cluster are available at \href{https://github.com/dioscuri-tda/ClusterGraph}{github.com/dioscuri-tda/ClusterGraph}.

    \begin{figure}[h]
        \centering
        \begin{subfigure}[b]{0.4\textwidth }
            \centering
            \caption{}
            \vspace{-0.25cm}
            \includegraphics[width=\linewidth]{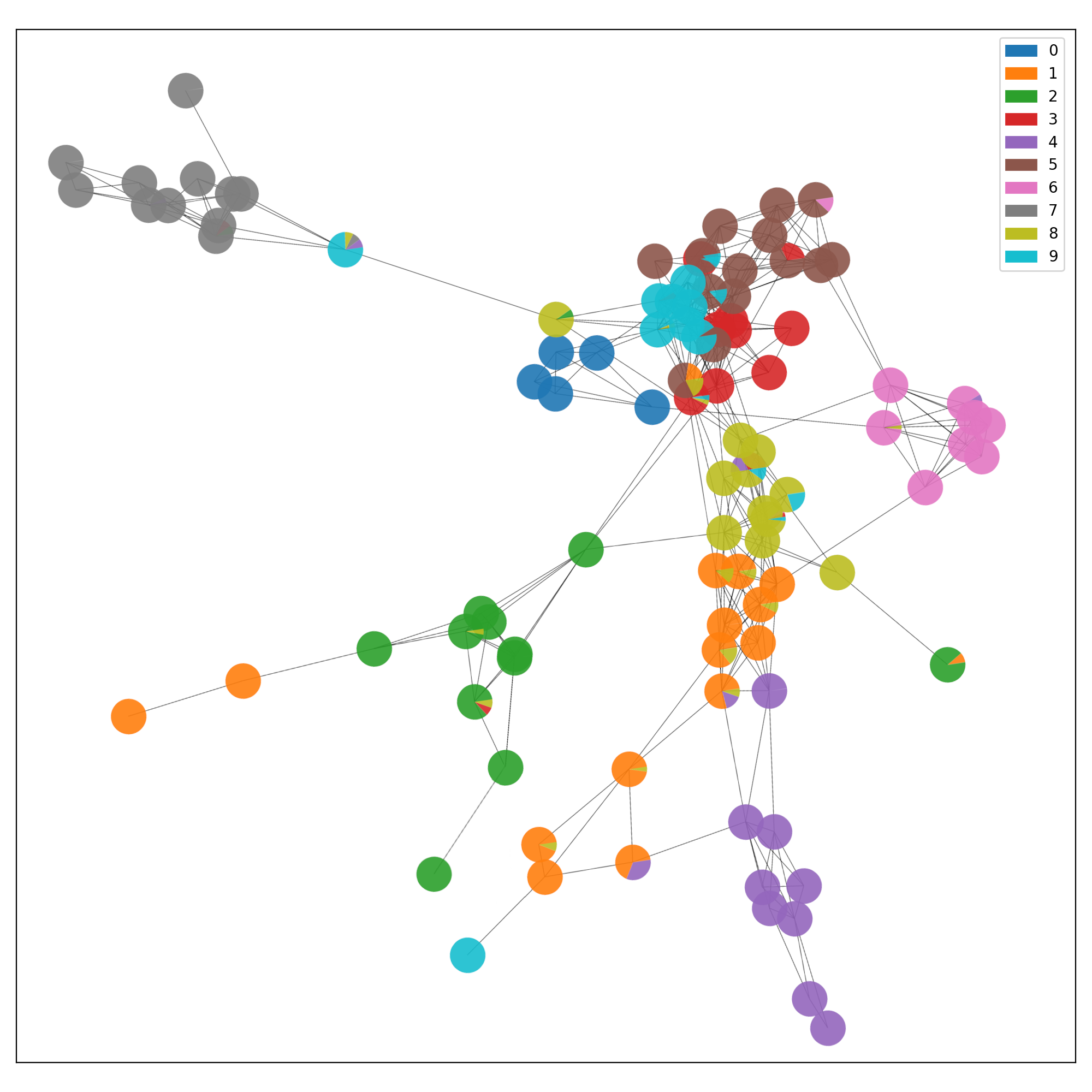}
        \end{subfigure}
        \hspace{0.25cm}
        \begin{subfigure}[b]{0.4\textwidth }
            \centering
            \caption{}
            \vspace{-0.25cm}
            \includegraphics[width=\linewidth]{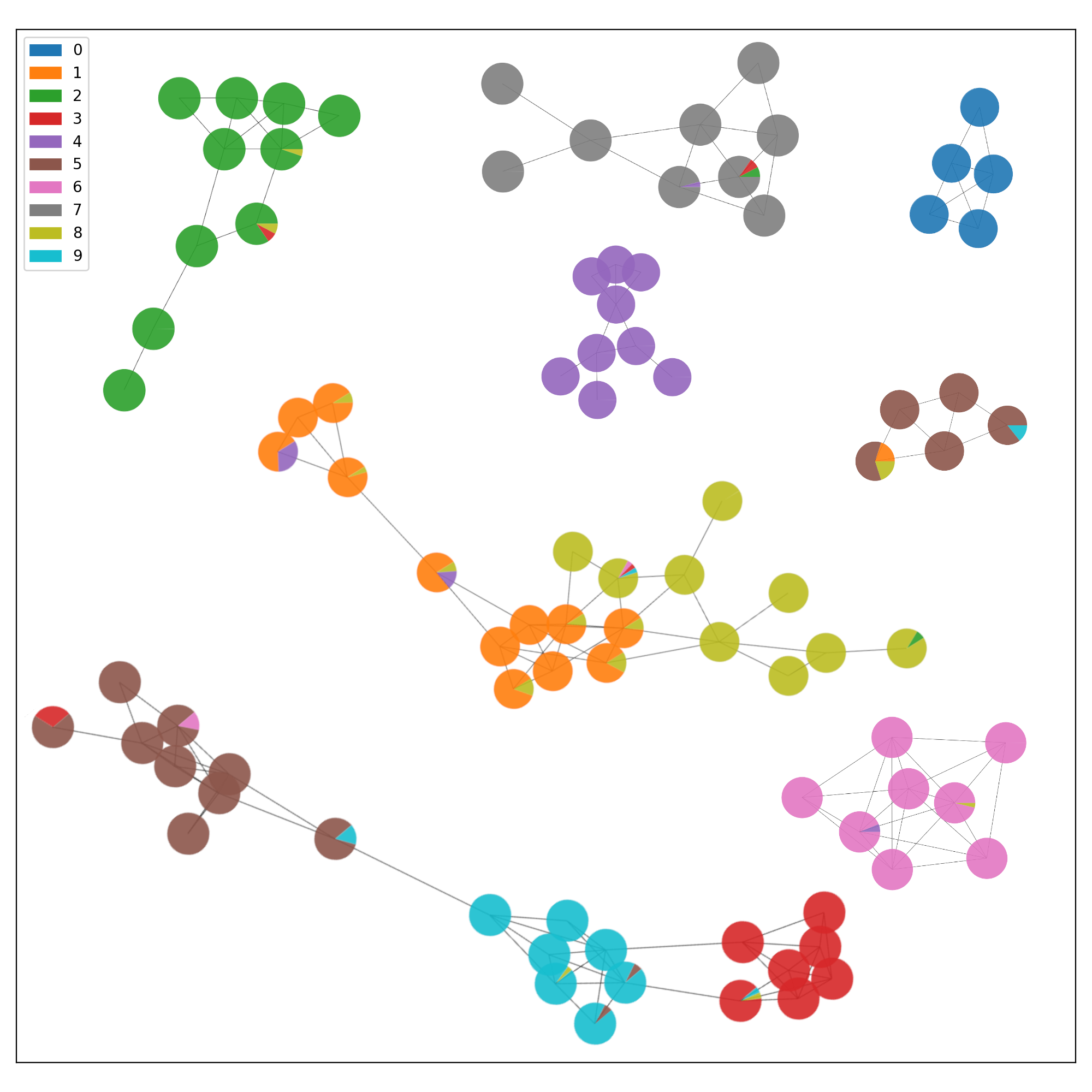}
        \end{subfigure}
        \caption{The two pruned ClusterGraphs obtained by removing all edges longer than $39$ (a) and $35$ (b). Each vertex in the ClusterGraph is depicted as a pie chart showing the percentage of points of each class in the corresponding cluster}
        \label{fig:digits}
    \end{figure}
\fi

\subsection{Concentric circles}
\label{subsec:cc}
To showcase the whole ClusterGraph pipeline (Fig.~\ref{fig:pipeline}) we considered $500$ points sampled from two concentric circles in the plane, depicted in Fig.~\ref{fig:circles}.
Clusters were computed using $k$-means with $20$ centroids and a ClusterGraph was built using the average Euclidean distance between points.
The $10$-nearest neighbors graph was used to estimate the intrinsic distance between data points.
The iterative metric distortion pruning procedure was applied and the pruned ClusterGraph is showed in Fig.~\ref{fig:circles}(b).
Note that the pruned ClusterGraph has two connected components, as a consequence of the $10$-nearest neighbors graph having two connected components.
Finally, the two components were merged by adding an edge between each vertex and its $3$-nearest neighbors in the other connected component.
We then pruned $20$ of these newly introduced edges using the connectivity based approach.
The resulting ClusterGraph is depicted in Fig.~\ref{fig:circles}(c).

\begin{figure}[h]
    \centering
    \includegraphics[width=\linewidth]{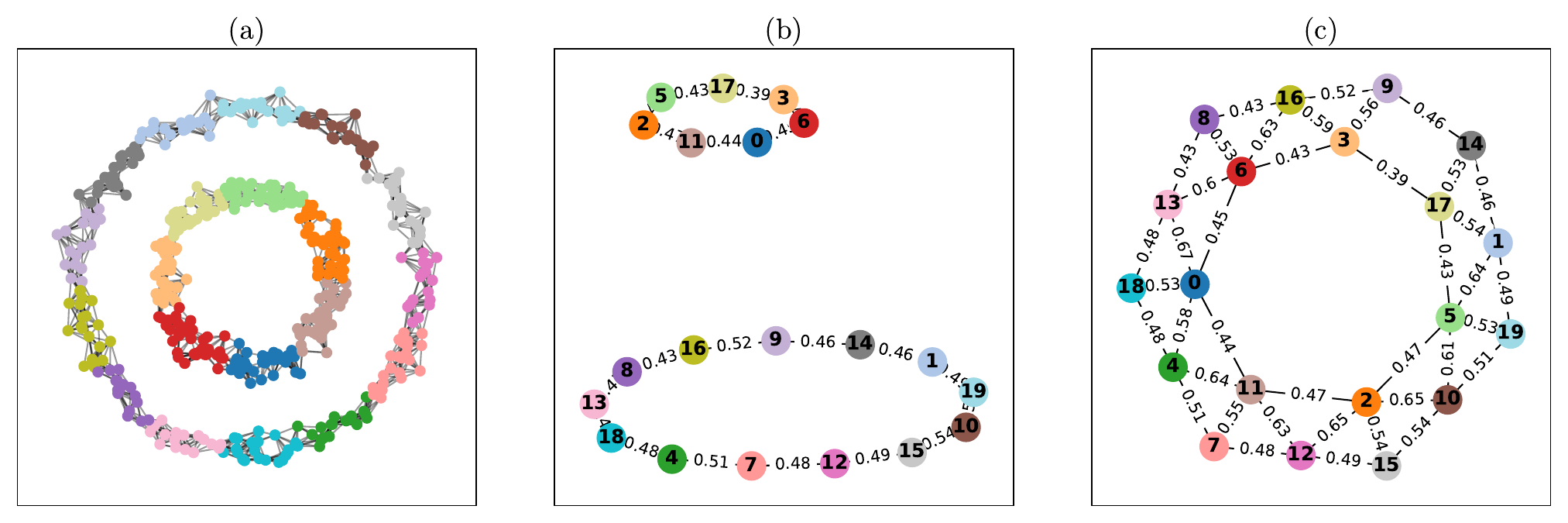}
    \caption{ClusterGraph built on the output of $k$-means for 500 points sampled from two concentric circles. The $20$ clusters are depicted in (a), on top of the $10$-nn graph. The two metric distortion-pruned components are depicted in (b) and subsequently merged and connectivity pruned (c). The vertices' colors are inherited from panel (a).
        \ifdefined\showComments
            \AJL{you are using k-nn graph as a baseline/reference point during pruning etc. Why not to use the k-nn graph then instead of cluster graph.}\PD{I think we have addressed this point due to the compression role of CG - let me know if it is good enough}
        \fi
    }
    \label{fig:circles}
\end{figure}

\subsection{Mice protein expression}

We analyzed the expression levels of $77$ proteins obtained from $38$ normal genotype control mice  and from $34$ of their trisomic littermates, both with and without treatment with the drug memantine and with and without the stimulation to learn~\cite{higuera_self-organizing_2015}. The original dataset contains $15$ measurements of each protein per sample, for a total of $1080$ data points.
% The eight groups learnt with different level of success. The outcome learning is divided into 4 classes. The mice which after the experiment managed to learn, the ones which did not learn, which failed and the ones with a rescued learning.
Control mice learn successfully while the trisomic ones fail, unless they are first treated with memantine, which rescues their learning ability.
%The data is divided in 4 classes : mice that were not stimulated to learn (\emph{no learning}, $555$), control mice that learned (\emph{normal}, $285$), not treated trisomic mice that failed to learn (\emph{failed}, $105$) and treated trisomic mice that learned successfully (\emph{rescued}, $135$).
The dataset is separated into four classes: mice that were not stimulated to learn (\emph{no learning}, $555$ samples), control mice that learned (\emph{normal}, $285$ samples), not treated and stimulated trisomic mice that failed to learn (\emph{failed}, $105$ samples) and treated and stimulated trisomic mice that learned successfully (\emph{rescued}, $135$ samples).
%
% The analysis was made on two differents groups of labels which are : the ability to learn (corresponding to the outcome of the experiment) and, the eight original groups of mice.

% \subsubsection{Comparison with dimensionality reduction techniques}

The Fig.~\ref{fig:mice} is showing the resulting ClusterGraph built on top of 18 clusters obtained via $k$-means~\cite{statLearning} alongside the output of popular dimensionality reduction techniques.
On Fig.~\ref{fig:mice}(a) which corresponds to ClusterGraph, one can quite clearly observe two main regions. One is almost only made of mice that were not stimulated to learn during the experiment. The second one is dominated by the normal learning and is also containing the rescued and failed learning. In such graph we can even observe a subregion dominated by the failed and rescued learning showing a difference between those proteins and the ones from the normal learning.
% \todo[inline]{I am not really sure about this paragraph, rewritings and suggestions are welcomed.}
% More than comparing the difference between proteins depending on the ability to learn, ClusterGraph let us observe the differences of the proteins depending on the type of treatment. One can observe that the shock to learn treatment has the strongest impact on proteins with one region made of proteins from mice which were stimulated (a group of close clusters in which more than 80\% of their data points were stimulated to learn) and another without stimulation ( a group of clusters with less than 20\% of points which were stimulated to learn). 
% Some differences can also be observed between trisomic and none trisomic mice with a group of clusters with less than 20\% of trisomic mice per cluster meanwhile another region is a lot more mixed. The region made of very few trisomic mice also corresponds to mice which were not shocked to learn. Such observation tends to show that the shock to learn minimizes the differences of proteins between trisomic and normal mice. When it comes to the memantine treatment, ClusterGraph is not showing any important difference between mice with memantine and without. 

\begin{figure}[ht!]
    \centering
    \includegraphics[width=\linewidth]{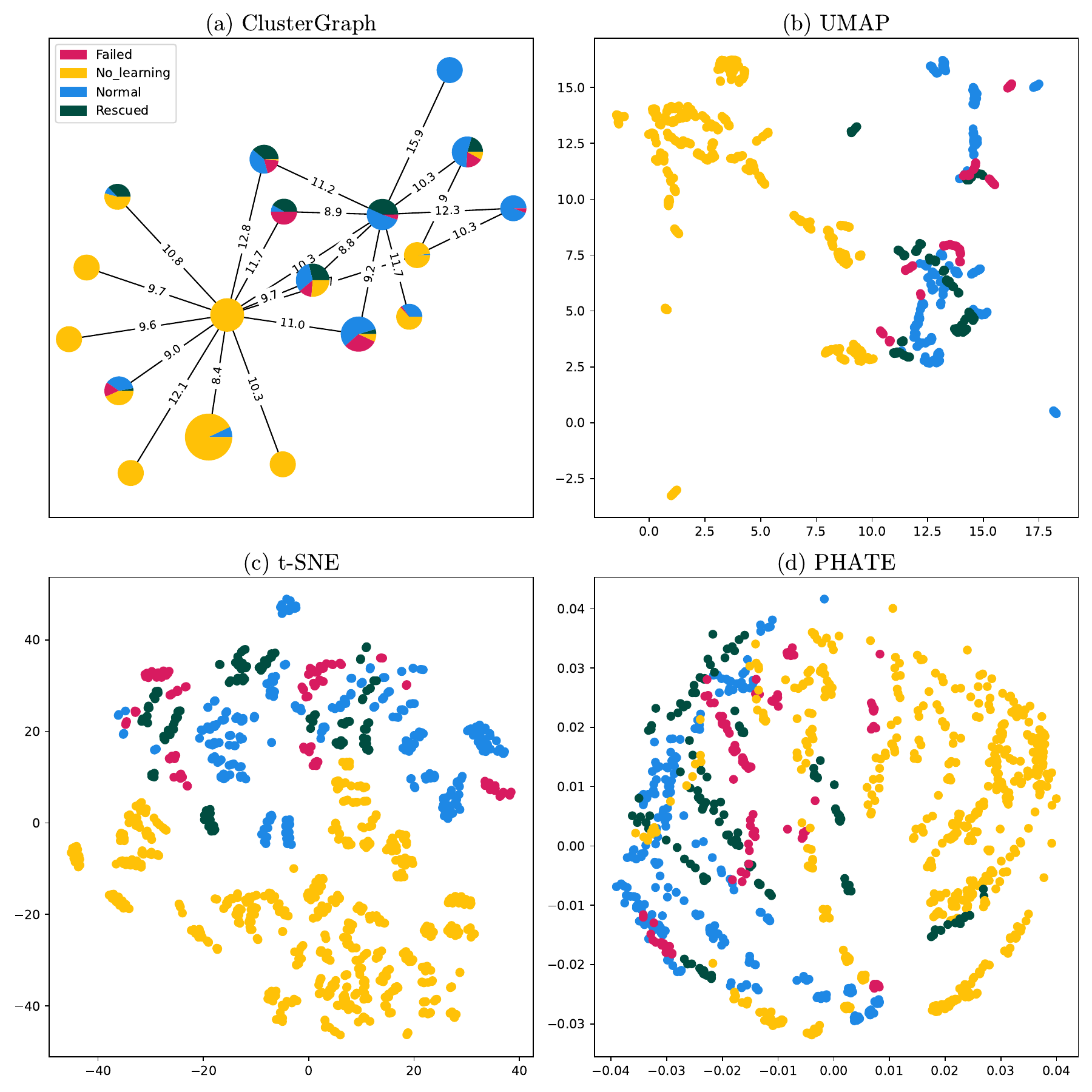}
    \caption{ClusterGraph visualization of the mice protein expression dataset alongside the output of other dimensionality reduction techniques. Each vertex in the ClusterGraph is depicted as a pie chart showing the percentage of points of each class in the corresponding cluster. The radius of each pie chart is proportional to the size of the corresponding cluster.}
    \label{fig:mice}
\end{figure}

\subsubsection{Assessing \textsc{UMAP}'s layout with ClusterGraph}

ClusterGraph can also be applied to the output of any dimensional reduction algorithm, in order to assess the quality of the returned low-dimensional embedding. As stated already, many of these methods aim to preserve the local structure of the point cloud, but they offer no guarantees on the global layout, as demonstrated in the following example.

We focus our attention on the \emph{failed} and \emph{rescued} classes. Variable selection using a random forest method~\cite{statLearning} was applied in order to identify the 10 most discriminating variables, this  10-dimensional pointcloud was then visualized using \textsc{UMAP} (Fig.~\ref{fig:mice_umap_clusters}(a)). Such projection is able to separate well the two classes, moreover some points appear to be outliers.

In order to quantify this observation, 10 clusters were selected from the two dimensional embedding using $k$-means (Fig.~\ref{fig:mice_umap_clusters}(b)). A ClusterGraph was then build on top of them using the distance between clusters in the original 10-dimensional space. The connectivity pruned ClusterGraph is depicted in Fig.~\ref{fig:mice_umap_clusters}(c), and it allows to compare the organization of the points in the original space versus the low dimensional \textsc{UMAP} embedding.
In both visualizations cluster 9 appears to be the central one, which is consistent with it being composed by a mixture of points from the two classes. We can however spot some clear differences between the two layouts.
Cluster 3, which is the outlier on the top left of the \textsc{UMAP} plot, is not an outlier in the ambient space, as it is closer to cluster 9 than, for example, cluster 6, which \textsc{UMAP} places close to the center.
Conversely, cluster 2, which appears to be near the centre in the \textsc{UMAP} plot, is at a significantly larger distance in the original space.

It is important to notice how both visualizations agree with respect to the \emph{local} layout, the differences appear a larger scales where \textsc{UMAP} fails to capture the global layout of the data.

\begin{figure}[h]
    \centering
    \includegraphics[width=\textwidth]{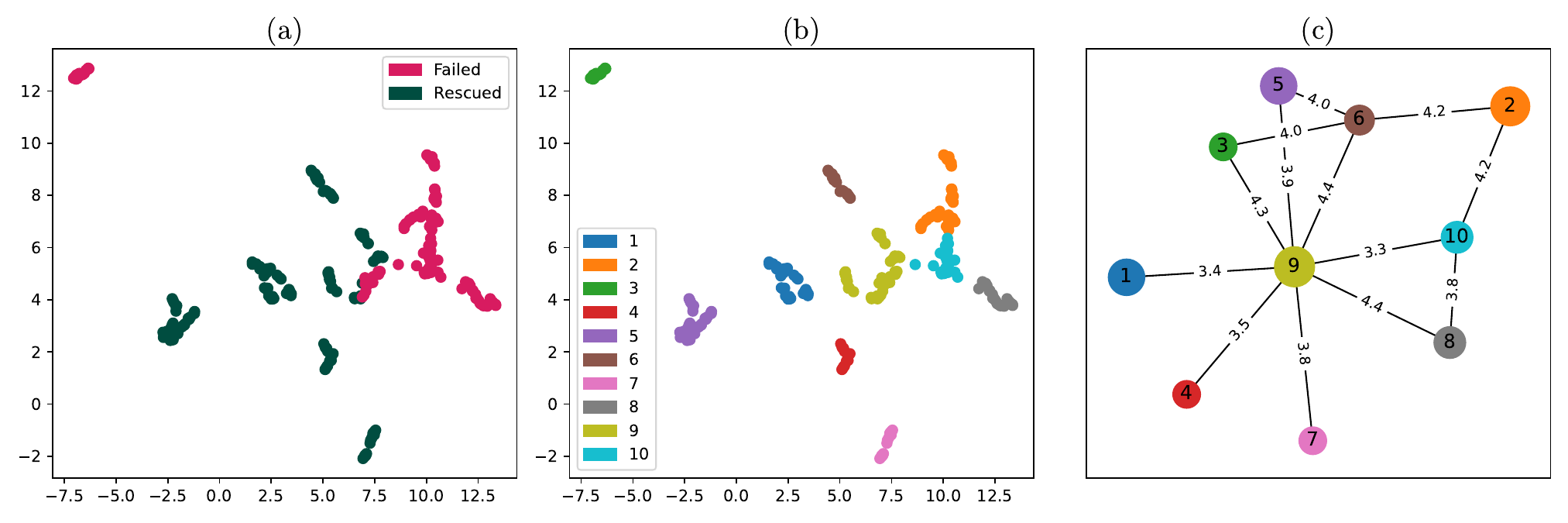}
    \caption{\textsc{UMAP} visualization of the \emph{failed} and \emph{rescued} samples is depicted in panels (a) and (b). Points are colored by class in the former and by the output of a $k$-means clustering in the latter. Panel (c) depicts the ClusterGraph obtained from such clustering, where the distances are computed on the original space.}
    \label{fig:mice_umap_clusters}
\end{figure}

\vspace{10cm}

\newpage
\subsubsection{Multi-level ClusterGraph}

In some scenarios we might be interested in further clustering a dataset which is already partitioned.
This is the case of our working example as each sample belongs to one of four classes: no learning, normal, failed or rescued.
A simple ClusterGraph obtained from such a coarse subdivision is depicted in Figure~\ref{fig:multi}(a).
Therefore we could cluster samples of each class separately thus obtaining a finer partition that still respect the class labels, i.e. all clusters are monochromatic with respect to the class label.

We partitioned the samples of each class into $5$ clusters by applying $k$-means to the data projected into the first $10$ principal components.
It is worth pointing out that, for each class, we computed the principal components by considering only the samples in that specific class.
This was done in order to obtain, for each class, a feature space that would emphasize the difference between samples in the same class.
Therefore the feature spaces of each the four classes that have been used in the $k$-means clustering are not not compatible with each others.

However, ClusterGraph can still be used to investigate the relative position of clusters belonging to different classes.
For this purpose we projected the entire dataset into its $10$ principal components, thus obtaining a global feature space.
We then built a ClusterGraph from the finer partition consisting of $5$ cluster for each one of the $4$ classes, using the average distance in the global feature space as distance between cluster.
A connectivity pruned version of this ClusterGraph is shown in Figure~\ref{fig:multi}(b).
Note how its global layout is similar to the ClusterGraph built on a partition of the whole dataset that does not respect the class labels, depicted in Figure~\ref{fig:mice}(a).
This multi-level approach allows us to obtain a clearer visualization by using the class labels as prior knowledge.

\begin{figure}[h]
    \centering
    % \begin{subfigure}{0.333\textwidth }
    %     \includegraphics[width=\textwidth]{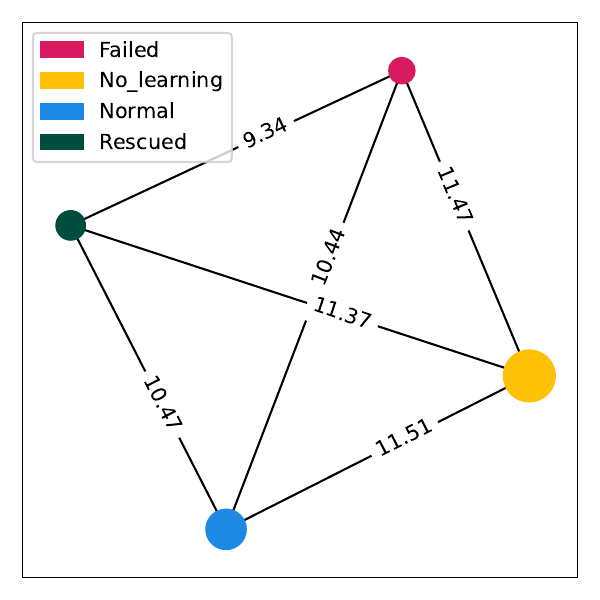}
    %     \label{fig:mice_pure}
    % \end{subfigure}
    % % \hfill
    % \begin{subfigure}{0.5\textwidth }
    %     \includegraphics[width=\textwidth]{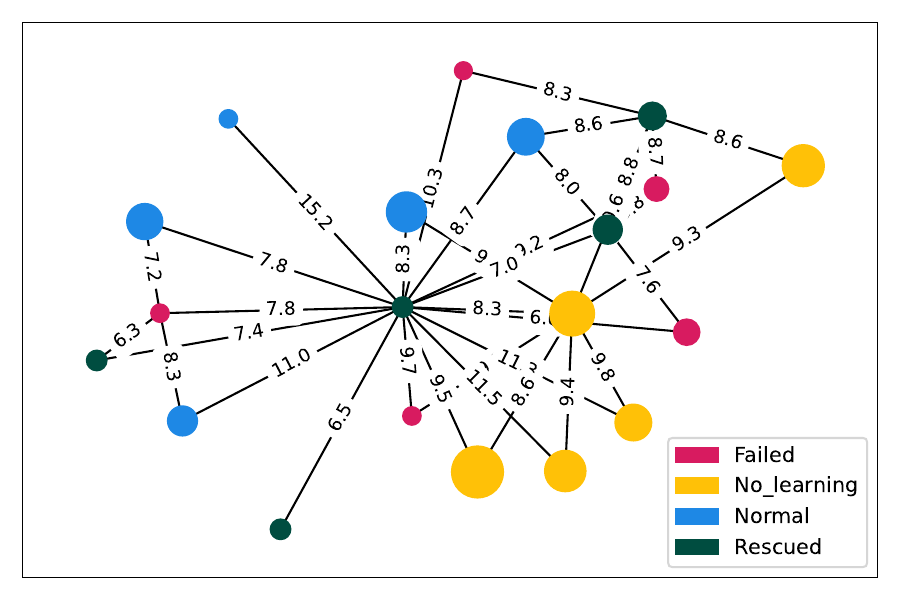}
    % \label{fig:mice_pure_splitted}
    % \end{subfigure}

    \includegraphics[width=0.8\textwidth]{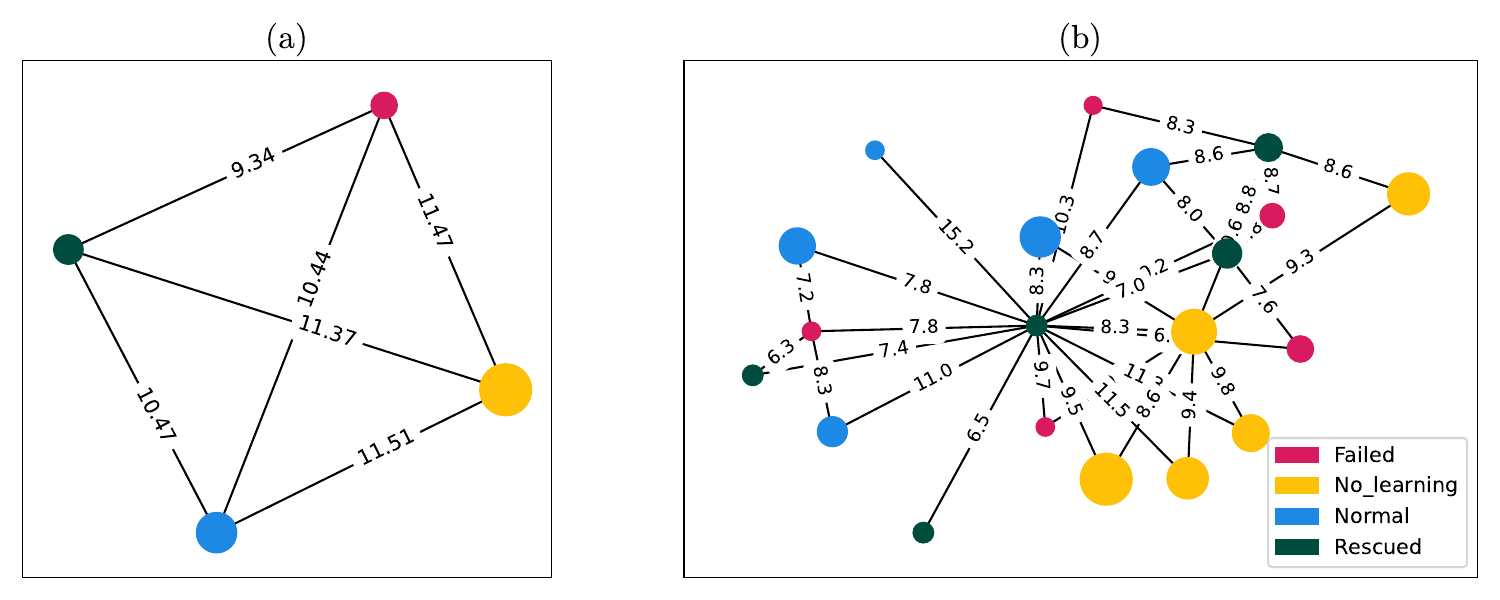}

    \caption{A simple ClusterGraph where each cluster corresponds to one of the four classes is depicted in panel (a). Each cluster can be further subdivided in 5 subclusters using $k$-means, the resulting connectivity pruned ClusterGraph is shown in panel (b). The radius of each vertex is proportional to the size of the corresponding cluster.}
    \label{fig:multi}

\end{figure}

%% mathis old stuff
\hide{
    ClusterGraph let us observe the organization of groups. Those groups can be results from clustering algorithms but also biological groups/labels. We can hence observe the relationship between the 4 groups resulted from the experiment and observe whether the type of treatment impacts the proteins. Such observation gives an overview/summary of data by gathering the groups meanwhile points cloud can be very noisy and we can loose the global interactions between groups especially with big data.\\

    It is known that from such experiment, there are four groups of mice at the end of the experiment. The No learning, failed, rescued and normal learning. We would like to observe how those groups interact in order to know in what extent the treatment impacts the proteins. We can directly observe the distances between those groups with ClusterGraph in figure~\ref{fig:mice_pure}. This figure lets us quickly observe the fact that the no learning mice are farther away from the other groups which are closer to each others.\\

    Such observation can be even more precise by breaking each of those groups into smaller clusters by applying KMeans on each one of them. As an example, we can break into 10 clusters the group of no learning mice. Once those groups splitted, we obtain pure clusters and we can observe their organization with ClusterGraph. We can observe such organization on figure~\ref{fig:mice_pure_splitted}. On this figure, we can now clearly observe the fact that most of no learning mice are very closed to each others except two outliers. We can also observe the fact that the other groups are pretty mixed showing that the treatment does not have strong impact on their proteins. \\
}

\hide{
    We can then choose the 9 most important variables at differentiating the failed and rescued groups by using RandomForest. Such choice leads to a 9-dimensional dataset. On this dataset we apply KMeans with 10 clusters and the metric distortion pruning and we obtain the graph on figure~\ref{fig:mice_failed_rescued_rf}. With those proteins, we can observe the fact that there are two main regions, one with the failed and the other with rescued learning except one outlier.

    \begin{figure}[h]
        \centering
        \includegraphics[width=\textwidth]{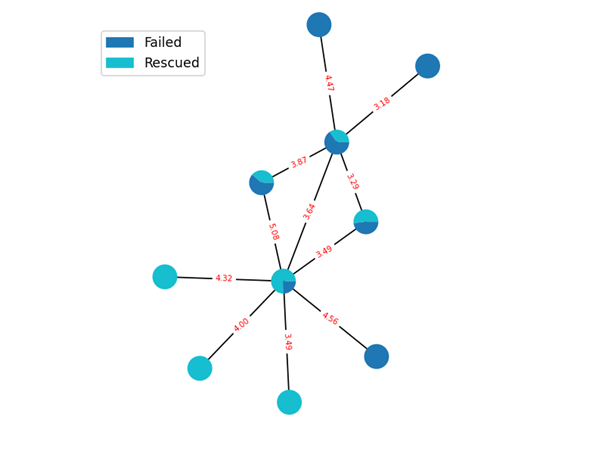}
        \caption{ClusterGraph computed with experiment results as groups}
        \label{fig:mice_failed_rescued_rf}
    \end{figure}
}

%%%%%%%%%%%%%%%%%%%%%%%%%%%%%%%%
\newpage
\section{Discussion}\label{discussion}
ClusterGraph is an effective tool for data visualization and compression. It can be constructed based on any clustering of the input data or arbitrary division of the data into labels. By introducing a notion of distance between clusters it provide a graph-based structure representing relationships within the data. Appropriate intrinsic-distance-preserving pruning of this graph results in a pruned graph that accurately represent the structure of the datasets. The quality of the obtained ClusterGraph can be assessed using the introduced concept of the metric distortion. It allows the analyst to assess the quality of ClusterGraph representation before further investigation.

We believe that ClusterGraph can serve as a comprehensive addition to the existing methods of dimensionality reduction and high dimensional data visualization.

\backmatter

\bmhead{Acknowledgements}
PD, DG and MH acknowledge support by the Dioscuri program initiated by the Max Planck Society, jointly managed with the National Science Centre (Poland), and mutually funded by the Polish Ministry of Science and Higher Education and the German Federal Ministry of Education and Research.

%===========================================================================================%%

\bibliography{references}% common bib file

\end{document}